\newcommand{\scrF}{\mathscr{F}}
\newtheorem{theorem}{Theorem}%  meant for continuous numbers
\newtheorem{corollary}[theorem]{Corollary}
\begin{document}

\title[Article Title]{European Option Pricing under Generalized Tempered Stable Process: Empirical Analysis}

%%=============================================================%%
%% Prefix	-> \pfx{Dr}
%% GivenName	-> \fnm{Joergen W.}
%% Particle	-> \spfx{van der} -> surname prefix
%% FamilyName	-> \sur{Ploeg}
%% Suffix	-> \sfx{IV}
%% NatureName	-> \tanm{Poet Laureate} -> Title after name
%% Degrees	-> \dgr{MSc, PhD}
%% \author*[1,2]{\pfx{Dr} \fnm{Joergen W.} \spfx{van der} \sur{Ploeg} \sfx{IV} \tanm{Poet Laureate} 
%%                 \dgr{MSc, PhD}}\email{iauthor@gmail.com}
%%=============================================================%%

\author*[1]{\fnm{Aubain} \sur{Nzokem}}\email{hilaire77@gmail.com}

%\author[1,2]{\fnm{Third} \sur{Author}}\email{iiiauthor@gmail.com}
%\equalcont{These authors contributed equally to this work.}

%\affil*[1]{\orgdiv{Department of Mathematics \& Statistics}, \orgname{York University}, \orgaddress{\street{4700 Keele Street}, \city{Toronto}, \postcode{M3J1P3}, \state{ON}, \country{Canada}}}

%\affil[3]{\orgdiv{Department}, \orgname{Organization}, \orgaddress{\street{Street}, \city{City}, \postcode{610101}, \state{State}, \country{Country}}}

%%==================================%%
%% sample for unstructured abstract %%
%%==================================%%

\abstract{The paper investigates the performance of the European option price when the log asset price follows a  rich class of Generalized Tempered Stable (GTS) distribution. The GTS distribution is an alternative to Normal distribution and $\alpha$-stable distribution for modeling asset return and many physical and economic systems. The data used in the option pricing computation comes from fitting the GTS distribution to the underlying S\&P 500 Index return distribution. The Esscher transform method shows that the GTS distribution preserves its structure. The extended Black-Scholes formula and the Generalized Black-Scholes Formula are applied in the study. The 12-point rule Composite Newton-Cotes Quadrature and the Fractional Fast Fourier (FRFT) algorithms were implemented and they yield the same European option price at two decimal places. Compared to the option price under the GTS distribution, the Black-Scholes (BS) model is underpriced for the Near-The-Money (NTM) and the in-the-money (ITM) options. However, the BS model and GTS European options yield the same option price for the deep out-of-the-money (OTM) and the deep-in-the-money (ITM) options.}

\keywords{ Generalized Tempered Stable (GTS) Distributions, Black-Scholes (BS) Model, Fractional Fourier Transform (FRFT), Equivalent Martingale Measure (EMM), Option Pricing}

%%\pacs[JEL Classification]{D8, H51}

%%\pacs[MSC Classification]{35A01, 65L10, 65L12, 65L20, 65L70}

\maketitle
 \section{Introduction}\label{sec1}
\noindent
Black-Scholes (BS) model \cite{black1973} is considered the cornerstone of the option pricing theory. The model relies on the fundamental assumption that the asset returns have a normal distribution with a known mean and variance. However, based on empirical studies, the BS model model is inconsistent with a set of well-established stylized features\cite{Cont_2001}. The $\alpha$-stable distribution has been proposed \cite{ken1999levy,nolan2020univariate} as an alternative to the normal distribution for modeling asset return and many types of physical and economic systems. In addition to the theoretical and empirical arguments for modeling with a stable distribution, the most important argument is that the Central Limit theorem can be generalized by the stable distribution \cite{rachev2011financial,carr2007self,nzokem2024self}. Although the stable distribution allows for varying degrees of tail heaviness and skewness, it has two major drawbacks \cite{nolan2020univariate, borak2005stable}: firstly, some lack of closed formulas for density and cumulative distribution functions; secondly, most of the moments of the stable distribution are infinite. An infinite variance of the asset return leads to an infinite price for derivative instruments such as options. The Generalized Tempered Stable (GTS) distribution was developed to overcome the shortcomings of the $\alpha$-stable distribution speciﬁcally in modeling high-frequency asset returns. The tails of the GTS distribution for asset returns are heavier than the normal distribution but thinner than the stable distribution \cite{grabchak2010financial, nolan2020univariate}. See \cite{kuchler2013tempered,rachev2011financial} For more details on Tempered Stable distribution.\\
\noindent
There has been considerable research on the stochastic volatility model for option pricing. In the current study, we contribute to the literature by empirically investigating the  European option pricing under GTS distribution. The GTS distributions form a seven-parameter family of infinitely divisible distributions, which cover several well-known subclasses like Variance Gamma distributions \cite{madan1998variance, nzokem2021fitting, Nzokem_Montshiwa_2023}, bilateral Gamma distributions \cite{kuchler2013tempered,Lorenzo2014,Dong2024} and CGMY distributions \cite{carr2003stochastic, risks10080148, Asmussen}. The S\&P 500 index historical data was used to fit the GTS distribution to the underlying data return distribution. The methodology is based on two computational algorithms \cite{mca29030044,nzokem2023enhanced}: the Fractional Fast Fourier (FRFT) \cite{cherubini2010fourier,eberlein2010analysis} and twelve-point rule Composite Newton--Cotes Quadrature \cite{nzokem_2021b,nzokem_2021}.\\
\noindent
We organize the paper as follows. We briefly present the GTS distribution's theoretical framework in the next section. In section 3, we estimate the seven parameters of the GTS distribution from the S\&P 500 index historical data. In section 4, the Equivalent Martingale Measure (EMM) is identified and computed; the Extended Black-Scholes and the Generalized Black-Scholes Formulas are provided. And the last section presents the empirical analysis of the European option price computations.

%\newpage
\section{ Generalized Tempered Stable (GTS) Process}

\noindent
The L\'evy measure of the Generalized Tempered Stable (GTS) distribution ($V(dx)$) is defined (\ref{eq:l3}) as a product of a tempering function ($q(x)$) (\ref{eq:l1}) and a L\'evy measure of the $\alpha$-stable distribution ($V_{stable}(dx)$)(\ref{eq:l2}).
 \begin{align}
q(x) &= e^{-\lambda_{+}x} \boldsymbol{1}_{x>0} + e^{-\lambda_{-}|x|} \boldsymbol{1}_{x<0} \label{eq:l1}\\
V_{stable}(dx) &=\left(\frac{\alpha_{+}}{x^{1+\beta_{+}}} \boldsymbol{1}_{x>0} + \frac{\alpha_{-}}{|x|^{1+\beta_{-}}} \boldsymbol{1}_{x<0}\right) dx \label{eq:l2}\\
V(dx) =q(x)V_{stable}(dx)&=\left(\frac{\alpha_{+}e^{-\lambda_{+}x}}{x^{1+\beta_{+}}} \boldsymbol{1}_{x>0} + \frac{\alpha_{-}e^{-\lambda_{-}|x|}}{|x|^{1+\beta_{-}}} \boldsymbol{1}_{x<0}\right) dx \label{eq:l3}
 \end{align}
\noindent 
where $0\leq \beta_{+}\leq 1$, $0\leq \beta_{-}\leq 1$, $\alpha_{+}\geq 0$, $\alpha_{-}\geq 0$, $\lambda_{+}\geq 0$ and  $\lambda_{-}\geq 0$. These parameters play an essential role in the Levy process. $\beta_{+}$ and $\beta_{-}$ are the indexes of stability bounded below by 0 and above by 2 \cite{borak2005stable}. They capture the peakedness of the distribution in a similar way as the  $\beta$-stable distribution, but the distribution tails are tempered. If $\beta$ increases (decreases), then the peakedness decreases (increases). $\alpha_{+}$ and $\alpha_{-}$ are the scale parameter, also called the process intensity \cite{boyarchenko2002non}, they determine the arrival rate of jumps for a given size. $\lambda_{+}$ and $\lambda_{-}$ control the decay rate on the positive and negative tails. Additionally, $\lambda_{+}$ and $\lambda_{-}$ are also skewness parameters. If $\lambda_{+}>\lambda_{-}$ ($\lambda_{+}<\lambda_{-}$), then the distribution is skewed to the left (right), and if $\lambda_{+}=\lambda_{-}$, then it is symmetric \cite{rachev2011financial,fallahgoul2019quantile}.
 \\
\noindent
The GTS distribution can be denoted by $X\sim GTS(\textbf{$\beta_{+}$}, \textbf{$\beta_{-}$}, \textbf{$\alpha_{+}$},\textbf{$\alpha_{-}$}, \textbf{$\lambda_{+}$}, \textbf{$\lambda_{-}$})$ and $X=X_{+} - X_{-}$ with $X_{+} \geq 0$, $X_{-}\geq 0$. $X_{+}\sim TS(\textbf{$\beta_{+}$}, \textbf{$\alpha_{+}$},\textbf{$\lambda_{+}$})$ and  $X_{-}\sim TS(\textbf{$\beta_{-}$}, \textbf{$\alpha_{-}$},\textbf{$\lambda_{-}$})$. 
\begin{align}
  \int_{-\infty}^{+\infty} V(dx) =\begin{cases}
  +\infty  &\text{if }{\beta_{+}\ge 0\vee\beta_{-} \ge 0}   \\
  \alpha_{+}{\lambda_{+}}^{\beta_{+}}\Gamma(-\beta_{+}) +  \alpha_{-}{\lambda_{-}}^{\beta_{-}}\Gamma(-\beta_{-}) &\text{if }{\beta_{+} < 0\wedge\beta_{-} < 0} \end{cases} \label{eq:l4}
     \end{align} 
From (\ref{eq:l4}), it results that when $\beta_{+} < 0$, TS(\textbf{$\beta_{+}$}, \textbf{$\alpha_{+}$},\textbf{$\lambda_{+}$}) is of finite activity and can be written as a Compound Poisson process on the right side ($X_{+}$). we have similar pattern when $\beta_{-} < 0$. However, when $0\le \beta_{+} \le 1$, $X_{+}$ is an infinite activity process with infinite jumps in any given time interval. We have a similar pattern when $0 \le \beta_{-} \le 1$. In addition to the infinite activities process, we have 
\begin{align}
  \int_{-\infty}^{+\infty} min(1,|x|)V(dx) <+ \infty \label{eq:l5} 
\end{align}
\noindent
By adding the location parameter, the GTS distribution becomes GTS(\textbf{$\mu$}, \textbf{$\beta_{+}$}, \textbf{$\beta_{-}$}, \textbf{$\alpha_{+}$},\textbf{$\alpha_{-}$}, \textbf{$\lambda_{+}$}, \textbf{$\lambda_{-}$}) and we have:
 \begin{align}
Y=\mu + X=\mu + X_{+} - X_{-} \quad \quad  Y\sim GTS(\mu, \beta_{+}, \beta_{-}, \alpha_{+}, \alpha_{-},\lambda_{+}, \lambda_{-}) \label {eq:l6}
  \end{align}

\begin{theorem}\label{lem5} \ \\ 
Consider a variable $Y \sim GTS(\textbf{$\mu$}, \textbf{$\beta_{+}$}, \textbf{$\beta_{-}$}, \textbf{$\alpha_{+}$},\textbf{$\alpha_{-}$}, \textbf{$\lambda_{+}$}, \textbf{$\lambda_{-}$})$, the characteristic exponent can be written
  \begin{align}
\Psi(\xi)=\mu\xi i + \alpha_{+}\Gamma(-\beta_{+})\left((\lambda_{+} - i\xi)^{\beta_{+}} - {\lambda_{+}}^{\beta_{+}}\right) + \alpha_{-}\Gamma(-\beta_{-})\left((\lambda_{-} + i\xi)^{\beta_{-}} - {\lambda_{-}}^{\beta_{-}}\right) \label {eq:l7}
  \end{align}
\end{theorem}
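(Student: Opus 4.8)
The plan is to read off $\Psi(\xi)$ from the L\'evy--Khintchine representation of the infinitely divisible law $GTS(\mu,\beta_{+},\beta_{-},\alpha_{+},\alpha_{-},\lambda_{+},\lambda_{-})$ and then reduce everything to one classical integral. Since $Y=\mu+X_{+}-X_{-}$ has no Gaussian component, location $\mu$, and L\'evy measure $V(dx)$ as in (\ref{eq:l3}), and since (\ref{eq:l5}) gives $\int_{\mathbb R}\min(1,|x|)\,V(dx)<\infty$, the truncation function is superfluous and one may write
\begin{align}
\Psi(\xi)=i\mu\xi+\int_{\mathbb R}\bigl(e^{i\xi x}-1\bigr)\,V(dx).\nonumber
\end{align}
I would first record this reduction (noting that the finite-activity case $\beta_{\pm}<0$ is covered by the same formula, the exponent then being of compound-Poisson type), then split the integral at the origin and, in the negative part, substitute $x\mapsto-x$. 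This turns the positive part into $\alpha_{+}\int_{0}^{\infty}(e^{i\xi x}-1)x^{-1-\beta_{+}}e^{-\lambda_{+}x}\,dx$ and the negative part into $\alpha_{-}\int_{0}^{\infty}(e^{-i\xi x}-1)x^{-1-\beta_{-}}e^{-\lambda_{-}x}\,dx$, so that both are instances of a single one-sided kernel.

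The core lemma to establish is that, for $0<\beta<1$ and $\mathrm{Re}\,z>0$,
\begin{align}
\int_{0}^{\infty}\bigl(e^{-zx}-1\bigr)x^{-1-\beta}\,dx=\Gamma(-\beta)\,z^{\beta},\nonumber
\end{align}
which I would prove by integration by parts with $v=-x^{-\beta}/\beta$: the boundary term vanishes at $0$ because $\beta<1$ and at $\infty$ because $\mathrm{Re}\,z>0$, leaving $-\tfrac{z}{\beta}\int_{0}^{\infty}x^{-\beta}e^{-zx}\,dx=-\tfrac{z}{\beta}\Gamma(1-\beta)z^{\beta-1}=\Gamma(-\beta)z^{\beta}$, where the last step uses $\Gamma(1-\beta)=-\beta\,\Gamma(-\beta)$. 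Inserting the tempering factor is then immediate, since $\int_{0}^{\infty}(e^{-zx}-1)x^{-1-\beta}e^{-\lambda x}\,dx=\int_{0}^{\infty}\bigl(e^{-(z+\lambda)x}-e^{-\lambda x}\bigr)x^{-1-\beta}\,dx=\Gamma(-\beta)\bigl[(z+\lambda)^{\beta}-\lambda^{\beta}\bigr]$.

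Applying this with $z=-i\xi$ to the positive part yields $\alpha_{+}\Gamma(-\beta_{+})\bigl[(\lambda_{+}-i\xi)^{\beta_{+}}-\lambda_{+}^{\beta_{+}}\bigr]$, with $z=i\xi$ to the negative part yields $\alpha_{-}\Gamma(-\beta_{-})\bigl[(\lambda_{-}+i\xi)^{\beta_{-}}-\lambda_{-}^{\beta_{-}}\bigr]$, and adding $i\mu\xi$ recovers (\ref{eq:l7}). I expect the main obstacle to be that the core identity was derived for $\mathrm{Re}\,z>0$, whereas I need it at the purely imaginary value $z=\mp i\xi$: I would bridge this by applying the identity at $z=\varepsilon\mp i\xi$ and letting $\varepsilon\downarrow0$ under dominated convergence (the factor $e^{-\lambda_{\pm}x}$, with $\lambda_{\pm}>0$, supplies the integrable majorant), and by fixing the principal branch of $w\mapsto w^{\beta}$ on $\mathrm{Re}\,w\ge0$ so that $(\lambda_{\pm}\mp i\xi)^{\beta_{\pm}}$ is unambiguous. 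The degenerate endpoints $\beta_{\pm}\in\{0,1\}$, where $\Gamma(-\beta)$ has a pole or integrability near $0$ fails, I would handle separately as limiting cases (recovering the Gamma/compound-Poisson and one-sided Cauchy-type limits).
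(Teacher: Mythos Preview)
The paper does not give its own proof here; it only cites \cite{jrfm17120531}, so there is no in-paper argument to compare against. Your approach is correct and is the standard route: L\'evy--Khintchine without compensation (justified by (\ref{eq:l5})), split at the origin, and reduce both halves to the Gamma integral $\int_{0}^{\infty}(e^{-zx}-1)x^{-1-\beta}\,dx=\Gamma(-\beta)z^{\beta}$ via one integration by parts. One small simplification: once the tempering factor $e^{-\lambda_{\pm}x}$ is in place, the rewritten integrand involves $e^{-(z+\lambda_{\pm})x}$ and $e^{-\lambda_{\pm}x}$, and both exponents already have strictly positive real part at $z=\mp i\xi$ whenever $\lambda_{\pm}>0$; hence the $\varepsilon\downarrow0$ detour is not needed and you may invoke the core lemma directly.
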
 
\noindent
See  \cite{jrfm17120531} for theorem \ref{lem5} proof \\

\begin{corollary}\label{lem4} \ \\
Let $Y=\left(Y_{t}\right)$ be a L\'evy process on $\mathbb{R^{+}}$ generated by $GTS(\textbf{$\mu$}, \textbf{$\beta_{+}$}, \textbf{$\beta_{-}$}, \textbf{$\alpha_{+}$},\textbf{$\alpha_{-}$}, \textbf{$\lambda_{+}$}, \textbf{$\lambda_{-}$})$. \\
$\forall t \in \mathbb{R^{+}}$, 
 \begin{align}
Y_{t} \sim GTS(t\mu, \beta_{+}, \beta_{-}, t\alpha_{+}, t\alpha_{-}, \lambda_{+}, \lambda_{-}) \label {eq:l8}
  \end{align}
\end{corollary}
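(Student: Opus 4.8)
The plan is to exploit the fact that the characteristic exponent $\Psi$ from Theorem~\ref{lem5} is exactly the cumulant generating function of the increment $Y_1$, and that for a L\'evy process the characteristic function of $Y_t$ is obtained by raising that of $Y_1$ to the power $t$ (equivalently, multiplying the exponent by $t$). So first I would write, for the L\'evy process $(Y_t)$ generated by $GTS(\mu,\beta_+,\beta_-,\alpha_+,\alpha_-,\lambda_+,\lambda_-)$, the standard relation
\begin{align}
\mathbb{E}\!\left[e^{i\xi Y_t}\right] = e^{-t\Psi(\xi)}, \label{eq:cf_t}
\end{align}
where $\Psi$ is given by \eqref{eq:l7}. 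This is the defining property of the L\'evy–Khintchine representation: since $Y_1\sim GTS(\mu,\beta_+,\beta_-,\alpha_+,\alpha_-,\lambda_+,\lambda_-)$ has characteristic function $e^{-\Psi(\xi)}$, stationarity and independence of increments give \eqref{eq:cf_t} for all $t\in\mathbb{R}^+$ (first for rationals by the divisibility argument, then for all $t$ by right-continuity).

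Next I would simply read off the scaling structure of $t\Psi(\xi)$. Multiplying \eqref{eq:l7} by $t$ gives
\begin{align}
t\Psi(\xi) = (t\mu)\xi i + (t\alpha_{+})\Gamma(-\beta_{+})\left((\lambda_{+} - i\xi)^{\beta_{+}} - {\lambda_{+}}^{\beta_{+}}\right) + (t\alpha_{-})\Gamma(-\beta_{-})\left((\lambda_{-} + i\xi)^{\beta_{-}} - {\lambda_{-}}^{\beta_{-}}\right). \label{eq:tPsi}
\end{align}
The key observation is that the right-hand side of \eqref{eq:tPsi} is precisely the characteristic exponent \eqref{eq:l7} with the substitutions $\mu \mapsto t\mu$, $\alpha_+ \mapsto t\alpha_+$, $\alpha_- \mapsto t\alpha_-$, while $\beta_\pm$ and $\lambda_\pm$ are unchanged — notice that $\alpha_+$ and $\alpha_-$ enter \eqref{eq:l7} linearly, so they absorb the factor $t$, whereas $\beta_\pm$ sit in exponents and $\lambda_\pm$ appear nonlinearly, so they cannot. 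Hence $t\Psi$ is the characteristic exponent of a $GTS(t\mu,\beta_+,\beta_-,t\alpha_+,t\alpha_-,\lambda_+,\lambda_-)$ variable.

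Finally I would invoke uniqueness of the characteristic function: two random variables with the same characteristic function have the same law, so $Y_t \sim GTS(t\mu,\beta_+,\beta_-,t\alpha_+,t\alpha_-,\lambda_+,\lambda_-)$, which is \eqref{eq:l8}. A cleaner, equivalent phrasing at the level of L\'evy measures would be to note that the GTS L\'evy triplet is linear in $(\mu,\alpha_+,\alpha_-)$ — indeed from \eqref{eq:l3} the L\'evy measure of the process at time $t$ is $tV(dx)$, which is again a GTS L\'evy measure with intensities $t\alpha_\pm$ and the same $\beta_\pm,\lambda_\pm$ — and the drift scales by $t$ as well; but the characteristic-function route is the shortest. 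The only mild subtlety — and the one place a careful reader might want a word of justification rather than calculation — is the validity of \eqref{eq:cf_t} for all real $t\ge 0$ (not just integers), which is the standard consequence of infinite divisibility of the GTS family together with stationary independent increments; since the excerpt already grants that $GTS$ is an infinitely divisible family and that $(Y_t)$ is a genuine L\'evy process, this step may be cited rather than reproved. I anticipate no real obstacle: the content is entirely the linearity of the exponent \eqref{eq:l7} in $\mu,\alpha_+,\alpha_-$.
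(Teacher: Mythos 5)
Your proposal is correct and takes essentially the route the paper itself relies on (and defers to its cited reference for): the identity $\vartheta(\xi,t)=e^{t\Psi(\xi)}$ from \eqref{eq:l9}, the observation that $t\Psi(\xi)$ in \eqref{eq:l7} is again a GTS characteristic exponent with $\mu,\alpha_{+},\alpha_{-}$ replaced by $t\mu,t\alpha_{+},t\alpha_{-}$ (and $\beta_{\pm},\lambda_{\pm}$ unchanged), and uniqueness of characteristic functions. One minor point: you write $\mathbb{E}\left[e^{i\xi Y_{t}}\right]=e^{-t\Psi(\xi)}$, whereas the paper's convention in \eqref{eq:l9} carries no minus sign; since you use your sign convention consistently the argument is unaffected, but the sign should be aligned with \eqref{eq:l9}.
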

See  \cite{jrfm17120531} for corollary \ref{lem4} proof 

\section{Parameter Estimation based on daily S\&P 500 price}
\noindent 
The Standard \& Poor’s 500 Composite Stock Price Index, also known as the S\&P 500, is a stock index that tracks the share prices of 500 of the largest public companies in the United States. It is used as a proxy describing the overall health of the stock market or even the U.S. economy. The daily prices were adjusted for splits and dividends. The period spans from January 04, 2010, to June 16, 2023 and the data were extracted from Yahoo Finance. \\

\begin{figure}[ht]
\vspace{-0.6cm}
    \centering
\hspace{-1.4cm}
  \begin{subfigure}[b]{0.45\linewidth}
    \includegraphics[width=\linewidth]{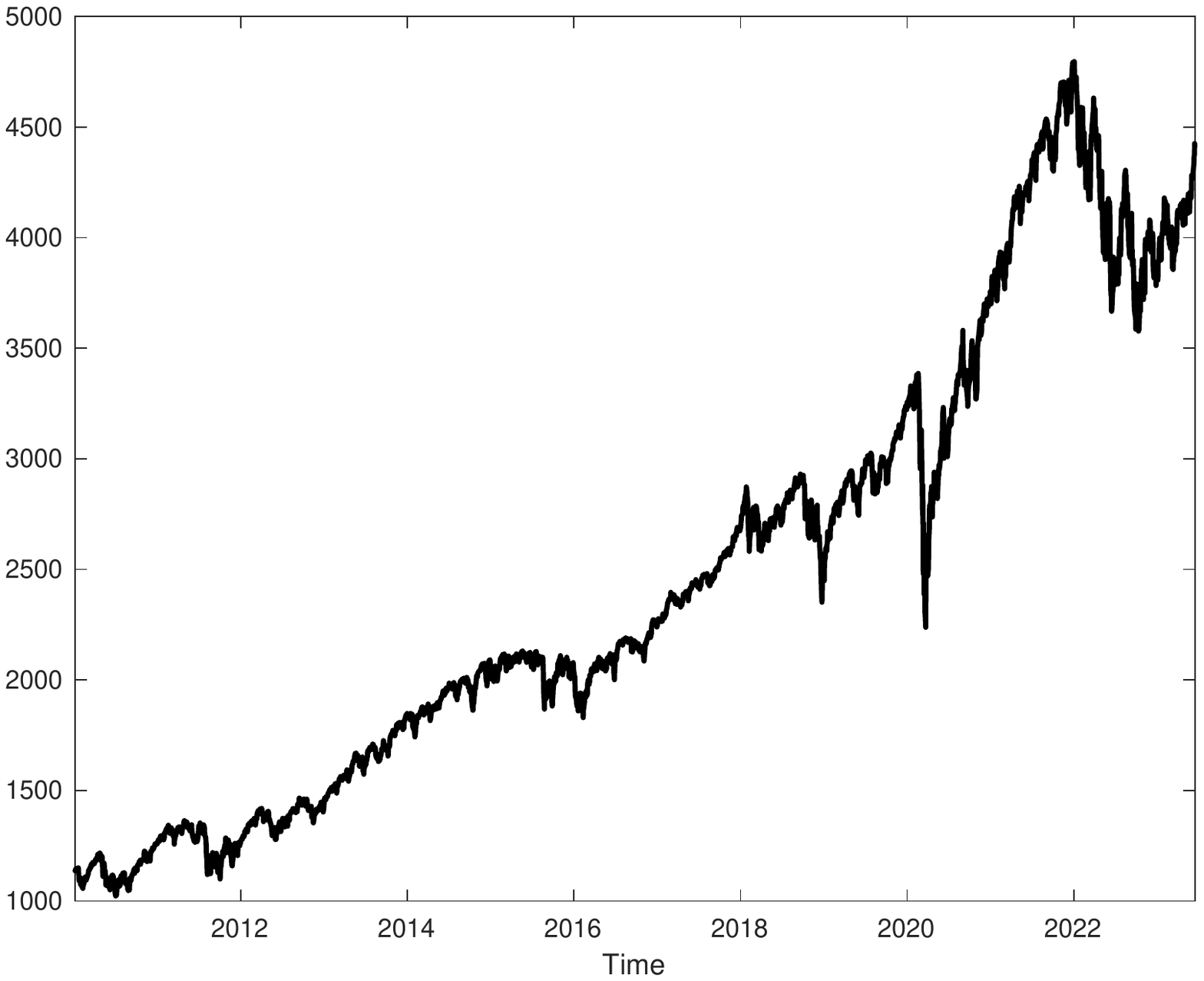}
\vspace{-0.5cm}
     \caption{S\&P 500 daily price}
         \label{fig1}
  \end{subfigure}
\hspace{-0.3cm}
  \begin{subfigure}[b]{0.45\linewidth}
    \includegraphics[width=\linewidth]{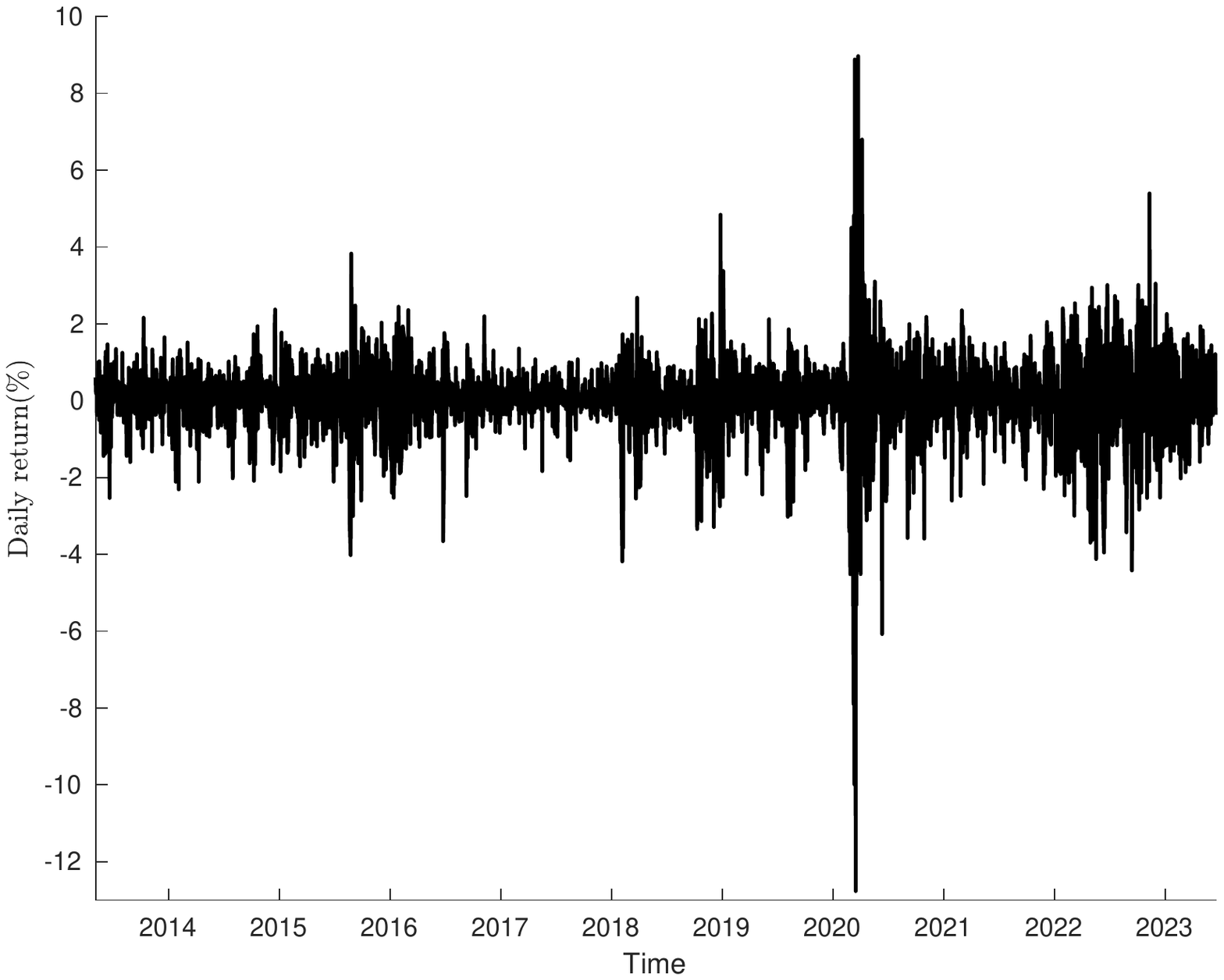}
\vspace{-0.5cm}
     \caption{S\&P 500 daily Return ($\%$)}
         \label{fig2}
          \end{subfigure}
\vspace{-0.5cm}
  \caption{S\&P 500 Index Historical Data}
  \label{fig3}
\vspace{-0.6cm}
\end{figure}
\noindent 
The daily price dynamics are provided in Fig \ref{fig1} and the daily return in Fig \ref{fig2}. As shown in Fig \ref{fig1}, prices have an increasing trend throughout the period, even after the temporal disruption in the first quarter of 2020 by the coronavirus pandemic. The volatility (in Fig \ref{fig2}) of the daily return is higher in the First quarter of 2020 amid the coronavirus pandemic and massive disruptions in the global economy.\\

\noindent
The characteristic function of the GTS process $Y= \{Y_{t}\}$ has the following expression
\begin{align} 
 \vartheta(\xi,t)=E\left[e^{i Y_{t}\xi}\right]=e^{t\Psi(\xi)}  \label {eq:l9}\end{align}
\medskip
The GTS density function ($f$) does not have an explicit closed form or an analytical expression, which makes it challenging to utilize the density function and its derivatives and to perform the Maximum likelihood method. The GTS density function ($f$)  was derived from Fourier transform ($F(f)$) as follows:
\begin{equation}
F[f](\xi) = \vartheta(-\xi,1) \quad \quad  f(y) = \frac{1} {2\pi}\int_{-\infty}^{+\infty}\! F[f](x) e^{iyx}\, \mathrm{d}x  \label {eq:l10}
\end{equation}
\noindent
The Fractional Fourier Transform (FRFT) technique was used to compute the probability density function (\ref {eq:l10}) and its derivatives $\left\{\frac{df(y,V)}{dV_j}\right\}_{1\leq j \leq 7}$ and $\left\{\frac{d^{2}f(y,V)}{dV_{k}dV_{j}}\right\}_{\substack{1\leq k \leq 7\\ 1\leq j \leq 7}}$. See \cite{nzokem2021fitting,nzokem2022,nzokem_2021b} for more details on FRFT methodologies.\\

\noindent 
The results of the GTS Parameter Estimation are reported in Table \ref{tab1}. As expected, the stability index ($\beta$), the process intensity ($\alpha$), and the decay rate ($\lambda$) are all positive. Both stability indexes are less than 1, and capture the high peakedness of the underlying distribution.  As shown in Table \ref{tab1},  S\&P 500 return is a bit left-skewed distribution ($\lambda_{+}>\lambda_{-}$), and the right-side distribution has a high arrival rate of jumps ($\alpha_{+}>\alpha_{-}$). The higher arrival rate of jump on the right-side distribution contributes to the increased nature of the S\&P 500 daily price in Fig \ref{fig1}.

\begin{table}[ht]
\caption{ FRFT Maximum Likelihood GTS Parameter Estimations}
\label{tab1} 
\vspace{-0.3cm}
\centering
\begin{tabular}{@{}lccccccc@{}}
\toprule
\textbf{Model} & \textbf{$\mu$} & \textbf{$\beta_{+}$} & \textbf{$\beta_{-}$} & \textbf{$\alpha_{+}$} & \textbf{$\alpha_{-}$}  & \textbf{$\lambda_{+}$}  & \textbf{$\lambda_{-}$}  \\ \midrule
\textbf{GTS}  & -0.693477 & 0.682290 & 0.242579 & 0.458582 & 0.414443 & 0.822222 & 0.727607  \\ \bottomrule 
\end{tabular}
\end{table} 

\noindent
The Fractional Fourier Transform (FRFT)  was used to estimate $(\mu, \beta_{+}, \beta_{-}, \alpha_{+}, \alpha_{-},\lambda_{+}, \lambda_{-})$ parameter. The Maximum likelihood method was applied to the density function (\ref{eq:l10}) and the second and third derivatives were used to compute the highest eigenvalues of the hessian Matrix and the value of $||\frac{d\log(ML)}{dV}||$.\\

\begin{table}[ht]
\vspace{-0.6cm}
\centering
\caption{GTS Parameters Estimations for S\&P 500 data}
\label{tab2}
\vspace{-0.3cm}
\setlength\extrarowheight{0.05pt} % for a less cramped look
\setlength\tabcolsep{3pt}  
%\resizebox{15cm}{!}{%
\begin{tabular}{ccccccccccc}
\hline
\textbf{$Iterations$} & \textbf{$\mu$} & \textbf{$\beta_{+}$} & \textbf{$\beta_{-}$} & \textbf{$\alpha_{+}$} & \textbf{$\alpha_{-}$} & \textbf{$\lambda_{+}$} & \textbf{$\lambda_{-}$} & \textbf{$Log(ML)$} & \textbf{$||\frac{dLog(ML)}{dV}||$} & \textbf{$Max Eigen$} \\ \hline
1  & -0.5267  & 0.6767  & 0.4366  & 0.4311  & 0.3259  & 0.8501  & 0.6015  & -4664.765  & 289.207  & 48.329 \\
2  & -0.5460  & 0.6706  & 0.4242  & 0.4499  & 0.3481  & 0.8092  & 0.6029  & -4660.216  & 35.985  & -6.043 \\
3  & -0.7108  & 0.6668  & 0.2061  & 0.4696  & 0.4125  & 0.8369  & 0.7368  & -4663.578  & 1082.003  & 449.252 \\
4  & -0.6704  & 0.6689  & 0.1125  & 0.4653  & 0.5003  & 0.8342  & 0.8633  & -4660.527  & 135.685  & 15.110 \\
5  & -0.7398  & 0.6678  & 0.0910  & 0.4830  & 0.4592  & 0.8555  & 0.8132  & -4660.021  & 45.708  & 10.842 \\
6  & -0.6517  & 0.6558  & 0.1967  & 0.4800  & 0.4274  & 0.8525  & 0.7535  & -4659.833  & 46.333  & 11.853 \\
7  & -0.8137  & 0.7200  & 0.2156  & 0.4402  & 0.4195  & 0.7942  & 0.7403  & -4662.482  & 1187.982  & 166.007 \\
8  & -0.7806  & 0.7064  & 0.2295  & 0.4467  & 0.4166  & 0.8036  & 0.7334  & -4659.776  & 85.658  & -3.431 \\
9  & -0.7544  & 0.6991  & 0.2347  & 0.4503  & 0.4154  & 0.8094  & 0.7308  & -4659.194  & 1.074  & -0.799 \\
10  & -0.7534  & 0.6989  & 0.2348  & 0.4504  & 0.4154  & 0.8096  & 0.7307  & -4659.194  & 1.037  & -0.814 \\
11  & -0.7524  & 0.6986  & 0.2349  & 0.4506  & 0.4154  & 0.8098  & 0.7307  & -4659.194  & 1.002  & -0.827 \\
12  & -0.7515  & 0.6983  & 0.2350  & 0.4507  & 0.4154  & 0.8100  & 0.7306  & -4659.194  & 0.969  & -0.840 \\
13  & -0.7497  & 0.6979  & 0.2352  & 0.4509  & 0.4154  & 0.8103  & 0.7306  & -4659.194  & 0.907  & -0.865 \\
14  & -0.7472  & 0.6972  & 0.2355  & 0.4513  & 0.4153  & 0.8109  & 0.7304  & -4659.194  & 0.827  & -0.899 \\
15  & -0.7464  & 0.6970  & 0.2356  & 0.4514  & 0.4153  & 0.8110  & 0.7304  & -4659.194  & 0.802  & -0.909 \\
16  & -0.7456  & 0.6968  & 0.2357  & 0.4515  & 0.4153  & 0.8112  & 0.7304  & -4659.194  & 0.779  & -0.919 \\
17  & -0.7434  & 0.6962  & 0.2360  & 0.4518  & 0.4153  & 0.8116  & 0.7303  & -4659.194  & 0.715  & -0.948 \\
18  & -0.7427  & 0.6960  & 0.2360  & 0.4519  & 0.4153  & 0.8118  & 0.7302  & -4659.193  & 0.696  & -0.957 \\
19  & -0.7376  & 0.6946  & 0.2367  & 0.4525  & 0.4152  & 0.8129  & 0.7300  & -4659.193  & 0.565  & -1.020 \\
20  & -0.7354  & 0.6940  & 0.2369  & 0.4529  & 0.4152  & 0.8133  & 0.7299  & -4659.193  & 0.514  & -1.047 \\
21  & -0.7343  & 0.6937  & 0.2371  & 0.4530  & 0.4151  & 0.8136  & 0.7298  & -4659.193  & 0.477  & -1.063 \\
22  & -0.7277  & 0.6919  & 0.2379  & 0.4539  & 0.4150  & 0.8149  & 0.7295  & -4659.192  & 0.303  & -1.156 \\
23  & -0.7269  & 0.6917  & 0.2380  & 0.4540  & 0.4150  & 0.8151  & 0.7294  & -4659.192  & 0.287  & -1.167 \\
24  & -0.7262  & 0.6915  & 0.2381  & 0.4541  & 0.4150  & 0.8153  & 0.7294  & -4659.192  & 0.272  & -1.177 \\
25  & -0.7074  & 0.6863  & 0.2406  & 0.4567  & 0.4147  & 0.8192  & 0.7284  & -4659.192  & 0.151  & -1.371 \\
26  & -0.7029  & 0.6850  & 0.2412  & 0.4573  & 0.4146  & 0.8202  & 0.7282  & -4659.192  & 0.121  & -1.403 \\
27  & -0.6988  & 0.6838  & 0.2418  & 0.4578  & 0.4145  & 0.8211  & 0.7279  & -4659.191  & 0.079  & -1.428 \\
28  & -0.6935  & 0.6823  & 0.2426  & 0.4586  & 0.4144  & 0.8222  & 0.7276  & -4659.191  & 0.753  & -1.644 \\
29  & -0.6935  & 0.6823  & 0.2426  & 0.4586  & 0.4144  & 0.8222  & 0.7276  & -4659.191  & 0.000  & -1.454 \\
30  & -0.6935  & 0.6823  & 0.2426  & 0.4586  & 0.4144  & 0.8222  & 0.7276  & -4659.191  & 0.000  & -1.454 \\ \hline
\end{tabular}
%}
\end{table}

\noindent
The parameter estimations of the last 30 iteration processes are shown in Table \ref {tab2}; in addition to the seven parameters displayed in the first row, we have added three statistical indicators: the value of the logarithmic of the product of the density functions ($\log(ML)$), the norm of the partial derivative function $||\frac{d\log(ML)}{dV}||$, and the highest value of the Eigenvalues of the Hessian matrix  ($Max Eigen Value$). As shown in the last row of Table \ref {tab2}, the iteration process stops when the estimated parameter converges to a stable parameter with a maximum value $\log(ML)=-4659.1914$,  $||\frac{d\log(ML)}{dV}||=0$  and $Max Eigen Value= -1.45426$. \\
\noindent
See \cite{jrfm17120531} for more details and methodologies on Fitting GTS distribution.

\section{Pricing European Options under GTS Process}
\subsection{GTS Process:  Risk Neutral Esscher Measure}
\noindent
The method of Esscher transform was introduced by \cite{gerber1995option}  as an efficient technique for
pricing derivative securities if a L\'evy process models the logarithms of the underlying asset prices. An Esscher transform of a stock-price process provides an equivalent martingale measure. The existence of the equivalent Esscher transform measure is not always guaranteed, and the issue of the unicity of the equivalent martingale measure remains recurrent when pricing an option.\\
\noindent
From the characteristic function $\vartheta(\xi,t)$ in (\ref{eq:l9}), the Moment generating function $M(h,t)$ of the Generalized Tempered Stable (GTS) distribution can be written as follows.
\begin{equation}
\begin{aligned}
 M(h,t)= \vartheta(-ih,t)=e^{t\Psi(-ih)}   \quad \hbox{with $ -\lambda_{-} < h < \lambda_{+} $} \label {eq:l11}
\end{aligned}
\end{equation}
Under the Esscher transform with parameter h, the probability density of $Y= \{Y_{t}\}$ becomes:
\begin{equation}
\begin{aligned}
\hat{f}(x,t,h)= \frac{e^{hx}{f(x, t)}}{M(h,t)}
\end{aligned}
\end{equation}
\noindent
The Moment generating function  $M(h,t)$ becomes M(z,t,h)
\begin{align}
M(z,t,h)= E^{h}\left[e^{zX_{t}}\right]=\int_{0}^{+\infty}e^{zx}\hat{f}(x,t,h)dx=\left(\frac{M(h+z,1)}{M(h,1)}\right)^{t} =M(z,1,h)^{t} \label {eq:l12}\\
M(z,1,h)= e^{\Psi_{h}(z)} \quad \quad \Psi_{h}(z)=\Psi(-i(h+z)) - \Psi(-i(h))) \label {eq:l13}
\end{align}

\noindent
Given the process, $\left\{e^{-r\tau}S(\tau)\right\}_{\tau \geq 0}$ with $r$ the constant risk-free rate of interest, we look into the conditions to have   $h^*$ such as
\begin{align}
E^{h^*}\left[e^{-r\tau}S(\tau)\right]=S(0) \label {eq:l14}
\end{align}
\noindent
$S(\tau)=S(0)e^{Y_{\tau}}$,  with $Y_{\tau}$ is the GTS process described by the characteristic exponent (\ref{eq:l7}). The equation (\ref {eq:l14}) becomes
\begin{equation}
\begin{aligned}
e^{r\tau}=E^{h^*}\left[e^{Y_{\tau}}\right]=M(1,1, h^*)^{\tau}=e^{\tau \Psi_{h^*}(1)} \quad \quad
\Psi_{h^*}(1)=r   \quad  \hbox{ with $ -\lambda_{-} < h^{*} < \lambda_{+} $ } \label {eq:l15}
\end{aligned}
\end{equation}
\noindent
The existence and unicity of $h^*$ in (\ref {eq:l15}) were studied empirically using the parameter data from Table \ref{tab1}. Over the interval $]-\lambda_{-} ; \lambda_{+}[$, $\Psi_{h}(1)$  is a strictly increasing function, as shown in Fig \ref{fig01a}. Fig \ref{fig01b} provides the solution $h^*$ of equation (\ref{eq:l15}) for free interest rate less than 10\%. The solution $h^*$ increases with free interest rate $r$.\\
\begin{figure}[ht]
\vspace{-0.5cm}
    \centering
  \begin{subfigure}[b]{0.45\linewidth}
    \includegraphics[width=\linewidth]{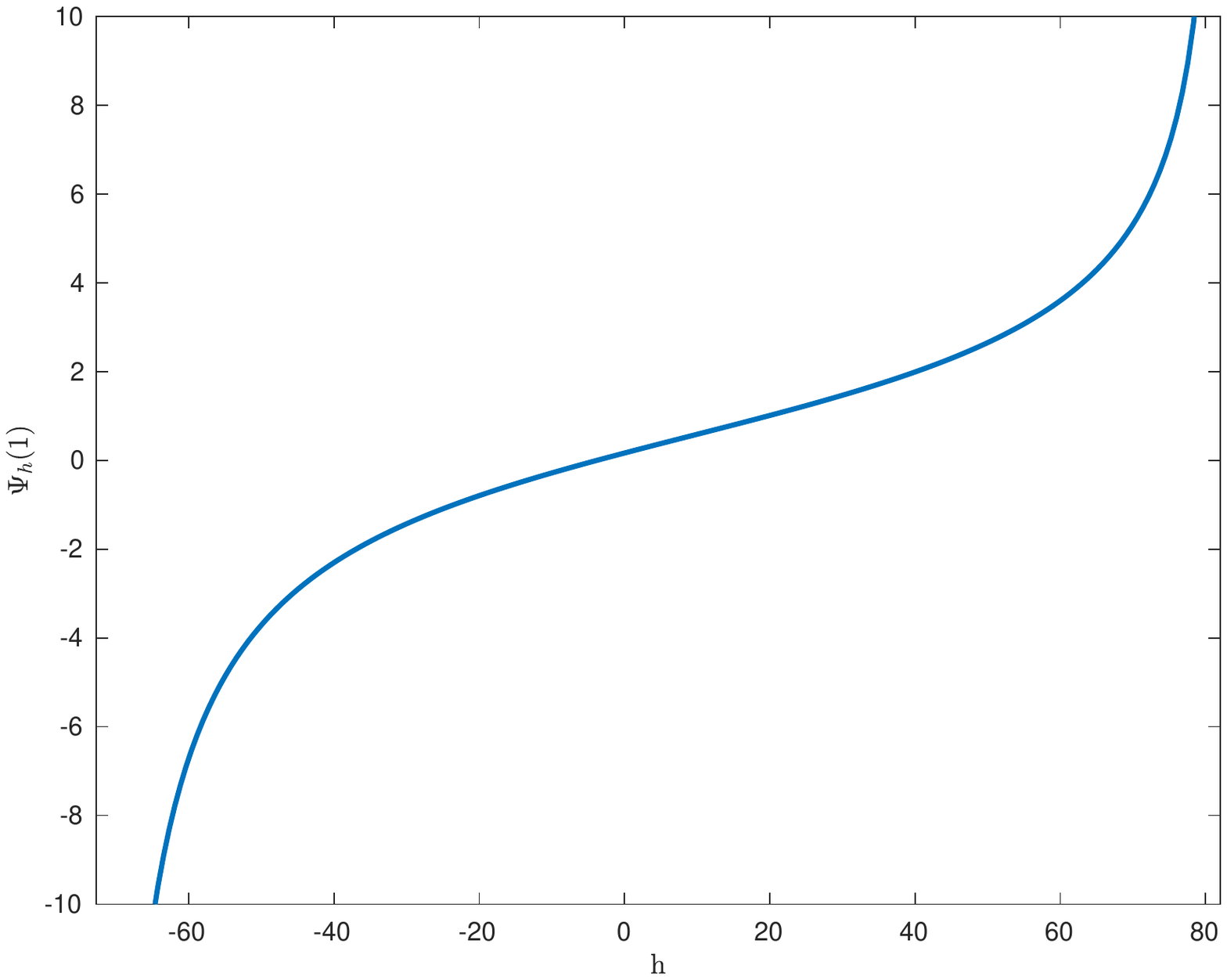}
\vspace{-0.5cm}
     \caption{$\Psi_{h}(1)$}
         \label{fig01a}
  \end{subfigure}
%\hspace{-0.3cm}
  \begin{subfigure}[b]{0.45\linewidth}
    \includegraphics[width=\linewidth]{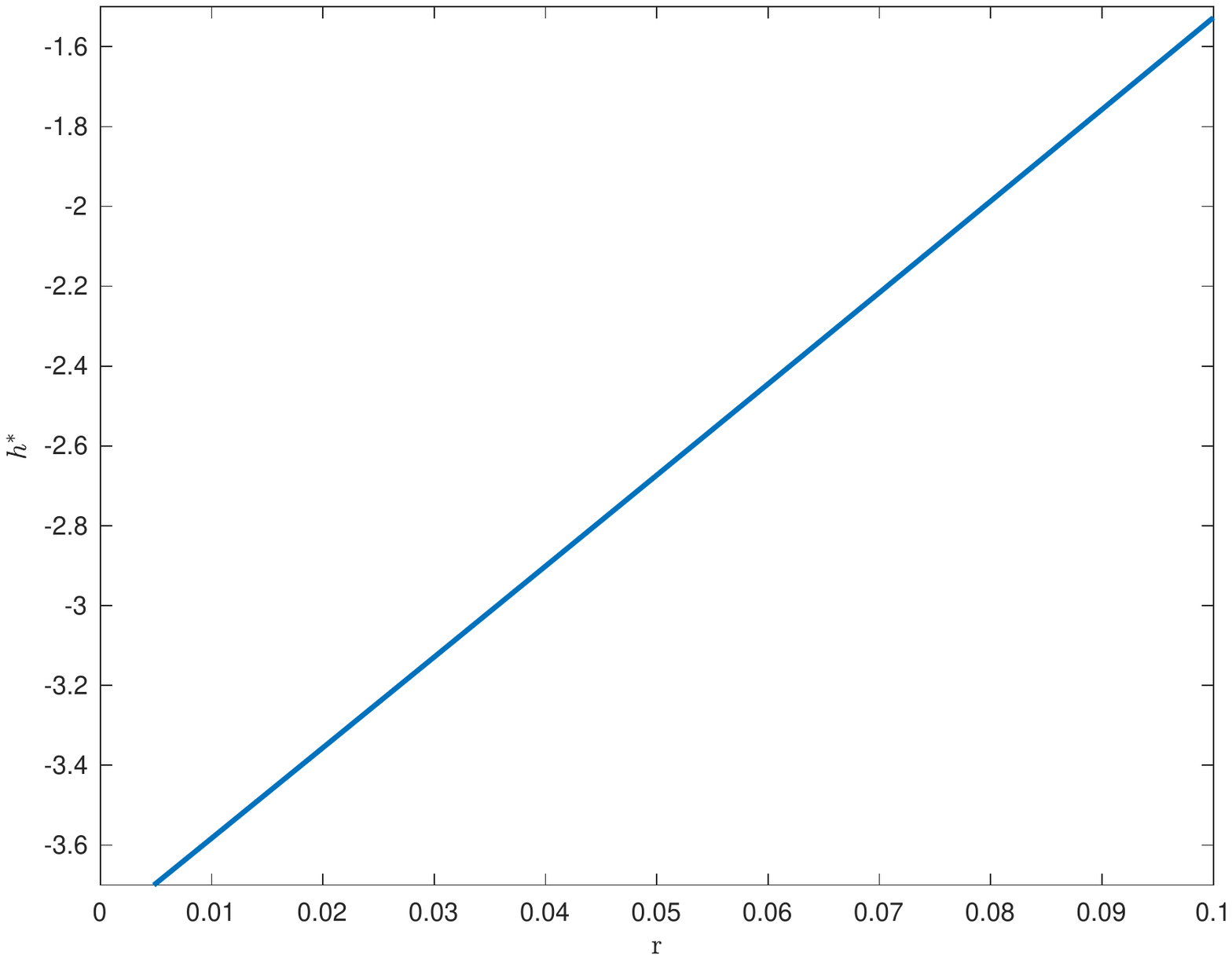}
\vspace{-0.5cm}
     \caption{$\Psi_{h^{*}}(1)=r$ }
         \label{fig01b}
          \end{subfigure}
\vspace{-0.7cm}
  \caption{$\mu=-0.693477$, $\beta_{+}=0.682290$, $\beta_{-}=0.242579$, $\alpha_{+}=0.458582$, $\alpha_{-}=0.414443$, $\lambda_{+}=0.822222$, $\lambda_{-}=0.727607$}
  \label{fig01}
\vspace{-0.7cm}
\end{figure}
%\newpage
 \begin{theorem}\label{lem71} (GTS Esscher transform distribution)\\
\noindent
Esscher transform of GTS process $Y=\{Y_{t}\}_{t \geq0} $  with parameter $(t\mu, \beta_{+}, \beta_{-}, t\alpha_{+},\alpha_{-}, \lambda_{+}, \lambda_{-})$ is also a is also a GTS process $Y^{*}=\{Y^{*}_{t}\}_{t \geq0} $  with parameter $(t\mu, \beta_{+}, \beta_{-}, t\alpha_{+}, t\alpha_{-}, \tilde{\lambda}_{+}, \tilde{\lambda}_{-})$ 
 \begin{align}
\tilde{\lambda}_{+}=\lambda_{+} - h^{*}  \quad \quad
\tilde{\lambda}_{-}=\lambda_{-} + h^{*} \label {eq:l16}\end{align}
\end{theorem}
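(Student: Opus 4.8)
\noindent The plan is to pin down the law of $Y^{*}_{t}$ under the Esscher measure $\mathbb{P}^{h^{*}}$ by computing its characteristic (equivalently, moment generating) function from (\ref{eq:l12})--(\ref{eq:l13}) and recognising the result as the corresponding function of a GTS variable whose decay parameters are shifted as in (\ref{eq:l16}).

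First I would use (\ref{eq:l12})--(\ref{eq:l13}): for $-\lambda_{-} < h^{*}+z < \lambda_{+}$,
\begin{equation}
M(z,t,h^{*}) = M(z,1,h^{*})^{t} = e^{t\Psi_{h^{*}}(z)}, \qquad \Psi_{h^{*}}(z) = \Psi\!\left(-i(h^{*}+z)\right) - \Psi\!\left(-ih^{*}\right).
\end{equation}
Plugging $\xi = -i(h^{*}+z)$ and $\xi = -ih^{*}$ into the explicit characteristic exponent (\ref{eq:l7}), the linear part $\mu\xi i$ contributes $\mu z$, the constants $\lambda_{+}^{\beta_{+}}$ and $\lambda_{-}^{\beta_{-}}$ cancel between the two evaluations, $\lambda_{+}-i\xi$ turns into $\lambda_{+}-h^{*}-z$ and $\lambda_{+}-h^{*}$, and $\lambda_{-}+i\xi$ turns into $\lambda_{-}+h^{*}+z$ and $\lambda_{-}+h^{*}$. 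With $\tilde{\lambda}_{+}=\lambda_{+}-h^{*}$ and $\tilde{\lambda}_{-}=\lambda_{-}+h^{*}$ this gives
\begin{equation}
\Psi_{h^{*}}(z) = \mu z + \alpha_{+}\Gamma(-\beta_{+})\!\left((\tilde{\lambda}_{+}-z)^{\beta_{+}} - \tilde{\lambda}_{+}^{\beta_{+}}\right) + \alpha_{-}\Gamma(-\beta_{-})\!\left((\tilde{\lambda}_{-}+z)^{\beta_{-}} - \tilde{\lambda}_{-}^{\beta_{-}}\right),
\end{equation}
which is precisely $\widetilde{\Psi}(-iz)$, where $\widetilde{\Psi}$ denotes the characteristic exponent (\ref{eq:l7}) of $GTS(\mu, \beta_{+}, \beta_{-}, \alpha_{+}, \alpha_{-}, \tilde{\lambda}_{+}, \tilde{\lambda}_{-})$.

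Next I would conclude. Because $h^{*}$ solves (\ref{eq:l15}) it lies in $]-\lambda_{-};\lambda_{+}[$, so $\tilde{\lambda}_{+}>0$ and $\tilde{\lambda}_{-}>0$ while $\mu$, $\beta_{\pm}$ and $\alpha_{\pm}$ are unchanged; hence $GTS(\mu, \beta_{+}, \beta_{-}, \alpha_{+}, \alpha_{-}, \tilde{\lambda}_{+}, \tilde{\lambda}_{-})$ is an admissible GTS law. Setting $z=i\xi$ in the identity above, the characteristic function of $Y^{*}_{t}$ under $\mathbb{P}^{h^{*}}$ is $e^{t\Psi_{h^{*}}(i\xi)} = e^{t\widetilde{\Psi}(\xi)}$, which by (\ref{eq:l9}) and Corollary \ref{lem4} applied to $\widetilde{\Psi}$ is the characteristic function of $GTS(t\mu, \beta_{+}, \beta_{-}, t\alpha_{+}, t\alpha_{-}, \tilde{\lambda}_{+}, \tilde{\lambda}_{-})$; uniqueness of characteristic functions then yields $Y^{*}_{t}\sim GTS(t\mu, \beta_{+}, \beta_{-}, t\alpha_{+}, t\alpha_{-}, \tilde{\lambda}_{+}, \tilde{\lambda}_{-})$. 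Since exponential tilting of a L\'evy process preserves stationary independent increments, $\{Y^{*}_{t}\}_{t\geq 0}$ is itself a L\'evy process, hence a GTS process with the stated parameters.

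The argument is essentially bookkeeping, so there is no deep obstacle; the points that need care are (i) the sign handling in the substitutions $\xi\mapsto -i(h^{*}+z)$ together with the branch of the power $w\mapsto w^{\beta}$, which is harmless because the arguments $\tilde{\lambda}_{+}-z$ and $\tilde{\lambda}_{-}+z$ stay positive on the strip where everything is finite, and (ii) the use of the domain condition $-\lambda_{-}<h^{*}<\lambda_{+}$ from (\ref{eq:l15}), which is exactly what makes the Esscher transform well defined and guarantees the transformed parameters $\tilde{\lambda}_{\pm}$ are positive. The passage from $t=1$ to general $t$, and from the one-dimensional marginal to the process, is supplied by (\ref{eq:l12}) and Corollary \ref{lem4}.
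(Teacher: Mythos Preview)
Your proposal is correct and follows essentially the same route as the paper: compute $\Psi_{h^{*}}(z)=\Psi(-i(h^{*}+z))-\Psi(-ih^{*})$ explicitly from (\ref{eq:l7}), observe that the result is the GTS exponent with $\lambda_{\pm}$ replaced by $\tilde{\lambda}_{\pm}=\lambda_{\pm}\mp h^{*}$, and lift to time $t$ via $M(z,t,h^{*})=e^{t\Psi_{h^{*}}(z)}$. Your write-up is in fact more careful than the paper's, since you check $\tilde{\lambda}_{\pm}>0$ from the domain condition in (\ref{eq:l15}) and invoke Corollary~\ref{lem4} explicitly for the passage to general $t$.
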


\begin{proof} \ \\
we recall the function $\Psi_{h^{*} }(z)$ in (\ref{eq:l13}).
\noindent
\begin{equation}
\begin{aligned}
\Psi_{h^{*}}(z)&=\Psi(-i(h^{*} +z)) - \Psi(-i(h^{*} )))\\
&= \mu z + \alpha_{+}\Gamma(-\beta_{+})\left(((\lambda_{+} - h^{*} ) - z)^{\beta_{+}} - {(\lambda_{+} - h^{*} )}^{\beta_{+}}\right) + \\ &\alpha_{-}\Gamma(-\beta_{-})\left(((\lambda_{-} + h^{*} ) + z)^{\beta_{-}} - {(\lambda_{-} + h^{*} )}^{\beta_{-}}\right) \label {eq:l17}
\end{aligned}
\end{equation}
Using the Esscher transform  method, the moment generating function for GST process $Y=\{Y_{t}\}_{t \geq 0} $ becomes:
\begin{align}
M(z,t,{h^{*}})= E^{{h^{*}}}\left[e^{zY_{t}}\right]=M(z,1,{h^{*}})^{t} = e^{t\Psi_{h^{*}}(z)} \quad \hbox{with $ -\lambda_{-} - h^{*} \leq z \leq \lambda_{+} - h^{*}$ } \label {eq:l18}
\end{align}
\noindent
The Esscher transform method preserves the structure of the GTS process. We have a new GST process generated by the GST distribution $GST(\mu, \beta_{+}, \beta_{-}, \alpha_{+},\alpha_{-}, \lambda_{+} - h^{*}, \lambda_{-} + h^{*})$. 
\end{proof}
%Option Pricing under Variance Gamma model: Extended Black-Scholes Formula (Framework)%\clearpage
%Pricing European Options under the 5 Parameters SPY Variance-Gamma Model: Extended Black-Scholes Formula
%\clearpage
%\newpage

\subsection{Extended Black-Scholes Formula}
\noindent
Under the EMM, $f(\xi, \tau, h^{*})$ is the probability density of  GST distribution with parameter \\ $(\tau\mu, \beta_{+}, \beta_{-}, \tau\alpha_{+}, \tau\alpha_{-}, \tilde{\lambda}_{+}, \tilde{\lambda}_{-})$. ($ \tilde{\lambda}_{+}, \tilde{\lambda}_{-}$) is defined in (\ref{eq:l16}).
\begin{corollary}\label{lem8}
(Extended Black-Scholes)\ \\
  Let $r$ a continuously compounded risk-free rate of interest; $Y=\{Y_{t}\}_{t \geq0} $, a GST Process with parameter $(t\mu, \beta_{+}, \beta_{-}, t\alpha_{+}, t\alpha_{-}, \tilde{\lambda}_{+}, \tilde{\lambda}_{-})$; and $(S(0)e^{X_{T}} - K)^{+}$ the terminal payoff for a contingent claim with the expiry date $T$.
Then at time $t <T$, the arbitrage price of the European call option with the strike price $K$ can be written as follows.
 \begin{align}
F^{GTS}_{call}(S_{t},t)&= S(t)\left[ 1- F(log(\frac{K}{S(t)}),\tau, h^{*}+1) \right] - Ke^{-r\tau}\left[ 1- F(log(\frac{K}{S(t)}),\tau, h^{*})\right] \label {eq:l15}\\
 &F(log(\frac{K}{S(t)}),\tau, h^{*})=\int_{-\infty }^{log(\frac{K}{S(t)})}f(\xi, \tau, h^{*})d\xi \label {eq:l19}
\end{align}

Where $\tau=T-t$, $F(k,\tau, h^{*})$ and $F(k,\tau, h^{*}+1)$ are the cumulative distribution of GST distribution. 
\end{corollary}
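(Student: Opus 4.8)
The plan is to derive the formula from the risk-neutral valuation identity under the Esscher equivalent martingale measure $\mathbb{Q}^{h^{*}}$ selected by the calibration \eqref{eq:l14}, which forces $\Psi_{h^{*}}(1)=r$. Since $Y=\{Y_{t}\}$ is a L\'evy process, the increment $Y_{T}-Y_{t}$ is independent of $\mathcal{F}_{t}$ and, under $\mathbb{Q}^{h^{*}}$, has the same law as $Y_{\tau}$ with $\tau=T-t$; by Theorem~\ref{lem71} together with Corollary~\ref{lem4} this law is $GTS(\tau\mu,\beta_{+},\beta_{-},\tau\alpha_{+},\tau\alpha_{-},\tilde{\lambda}_{+},\tilde{\lambda}_{-})$ with density $f(\cdot,\tau,h^{*})$, the parameters $\tilde{\lambda}_{\pm}$ being those of \eqref{eq:l16}. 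Writing $S(T)=S(t)e^{Y_{T}-Y_{t}}$, so that $\{S(T)>K\}=\{Y_{T}-Y_{t}>\log(K/S(t))\}$, I would start from $F^{GTS}_{call}(S_{t},t)=e^{-r\tau}E^{h^{*}}\!\left[(S(T)-K)^{+}\mid\mathcal{F}_{t}\right]$ and split the payoff into its two linear pieces to obtain
\[
F^{GTS}_{call}(S_{t},t)=e^{-r\tau}S(t)\!\int_{\log(K/S(t))}^{+\infty}\! e^{\xi}f(\xi,\tau,h^{*})\,d\xi \;-\; Ke^{-r\tau}\!\int_{\log(K/S(t))}^{+\infty}\! f(\xi,\tau,h^{*})\,d\xi .
\]
The second integral equals $Ke^{-r\tau}\bigl[1-F(\log(K/S(t)),\tau,h^{*})\bigr]$ by the definition \eqref{eq:l19}, which is exactly the second term of the claimed formula.

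The core of the argument is the identification of the first integral, and the idea is to absorb the weight $e^{\xi}$ by one further Esscher tilt. The map $\xi\mapsto e^{\xi}f(\xi,\tau,h^{*})/M(1,\tau,h^{*})$ is a probability density, and since $M(1,\tau,h^{*})=e^{\tau\Psi_{h^{*}}(1)}=e^{r\tau}$ by \eqref{eq:l18} and the calibration, one gets $e^{-r\tau}e^{\xi}f(\xi,\tau,h^{*})=f(\xi,\tau,h^{*}+1)$: indeed $f(\cdot,\tau,h^{*})$ is the Esscher transform of $f(\cdot,\tau)$ with parameter $h^{*}$, so tilting it once more by $1$ composes into a single tilt by $h^{*}+1$, i.e. into the density of $Y_{\tau}$ under the Esscher measure $\mathbb{Q}^{h^{*}+1}$. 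Applying Theorem~\ref{lem71} with $h^{*}+1$ in place of $h^{*}$, this density is the $GTS(\tau\mu,\beta_{+},\beta_{-},\tau\alpha_{+},\tau\alpha_{-},\lambda_{+}-(h^{*}+1),\lambda_{-}+(h^{*}+1))$ density, so its cumulative distribution is $F(\cdot,\tau,h^{*}+1)$; this step uses the admissibility $-\lambda_{-}<h^{*}+1<\lambda_{+}$, already implicit in the finiteness of $\Psi_{h^{*}}(1)$. Hence
\[
e^{-r\tau}S(t)\!\int_{\log(K/S(t))}^{+\infty}\! e^{\xi}f(\xi,\tau,h^{*})\,d\xi = S(t)\bigl[1-F(\log(K/S(t)),\tau,h^{*}+1)\bigr],
\]
which is the first term; adding the two pieces yields the asserted price.

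The step I expect to cost the most effort is the re-parametrization identity $e^{-r\tau}e^{\xi}f(\xi,\tau,h^{*})=f(\xi,\tau,h^{*}+1)$: one must check that multiplying the Esscher density $f(\cdot,\tau,h^{*})$ by $e^{\xi}$ and renormalizing is again an Esscher density with parameter shifted to $h^{*}+1$, confirm through $\Psi_{h^{*}}(1)=r$ that the normalizing constant is precisely $e^{r\tau}$ so that the discount factor cancels exactly, and verify the domain constraint $h^{*}+1<\lambda_{+}$ under which the tilt is legitimate and Theorem~\ref{lem71} applies. The remaining ingredients — independence and stationarity of the increments, the linearity split of the payoff, and recognizing each tail integral as $1-F$ — are routine. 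As a byproduct the same computation delivers the European put price and put–call parity, which I would record as a remark but which is not needed for the corollary.
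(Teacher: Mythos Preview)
Your argument is correct and is precisely the classical Gerber--Shiu derivation: discount, split the payoff, recognize the second tail integral as $1-F(\cdot,\tau,h^{*})$, and absorb the factor $e^{\xi}$ in the first integral via one further Esscher tilt, using the martingale calibration $\Psi_{h^{*}}(1)=r$ to identify the normalizer $M(1,\tau,h^{*})=e^{r\tau}$ so that the discount cancels and the tilted density is $f(\cdot,\tau,h^{*}+1)$. The paper does not actually prove this corollary in the text; it simply refers the reader to \cite{nzokem2022}, whose argument is the same Esscher change-of-measure computation you outline, so your proposal matches the intended proof.
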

\noindent
See  \cite{nzokem2022} for corollary \ref{lem8} proof \\

\noindent
The GTS distribution function's probability density and cumulative functions were obtained through the inverse Fourier transform. See \cite{nzokem2021fitting,nzokem_2021b,nzokem2022} for details.
 \begin{align}
 f(\xi, \tau, h^{*})=\frac{1}{2\pi}\int_{-\infty}^{+\infty}e^{i \xi z + \tau \Psi_{h^{*}}(-z)}dz \quad  \quad F(\xi, \tau, h^{*})=\frac{1}{2\pi}\int_{-\infty}^{+\infty}\frac{e^{i \xi z + \tau \Psi_{h^{*}}(-z)}}{iz}dz +\frac{1}{2} \label {eq:l20a} 
\end{align}
\begin{figure}[ht]
\vspace{-0.3cm}
    \centering
\hspace{-1cm}
  \begin{subfigure}[b]{0.33\linewidth}
    \includegraphics[width=\linewidth]{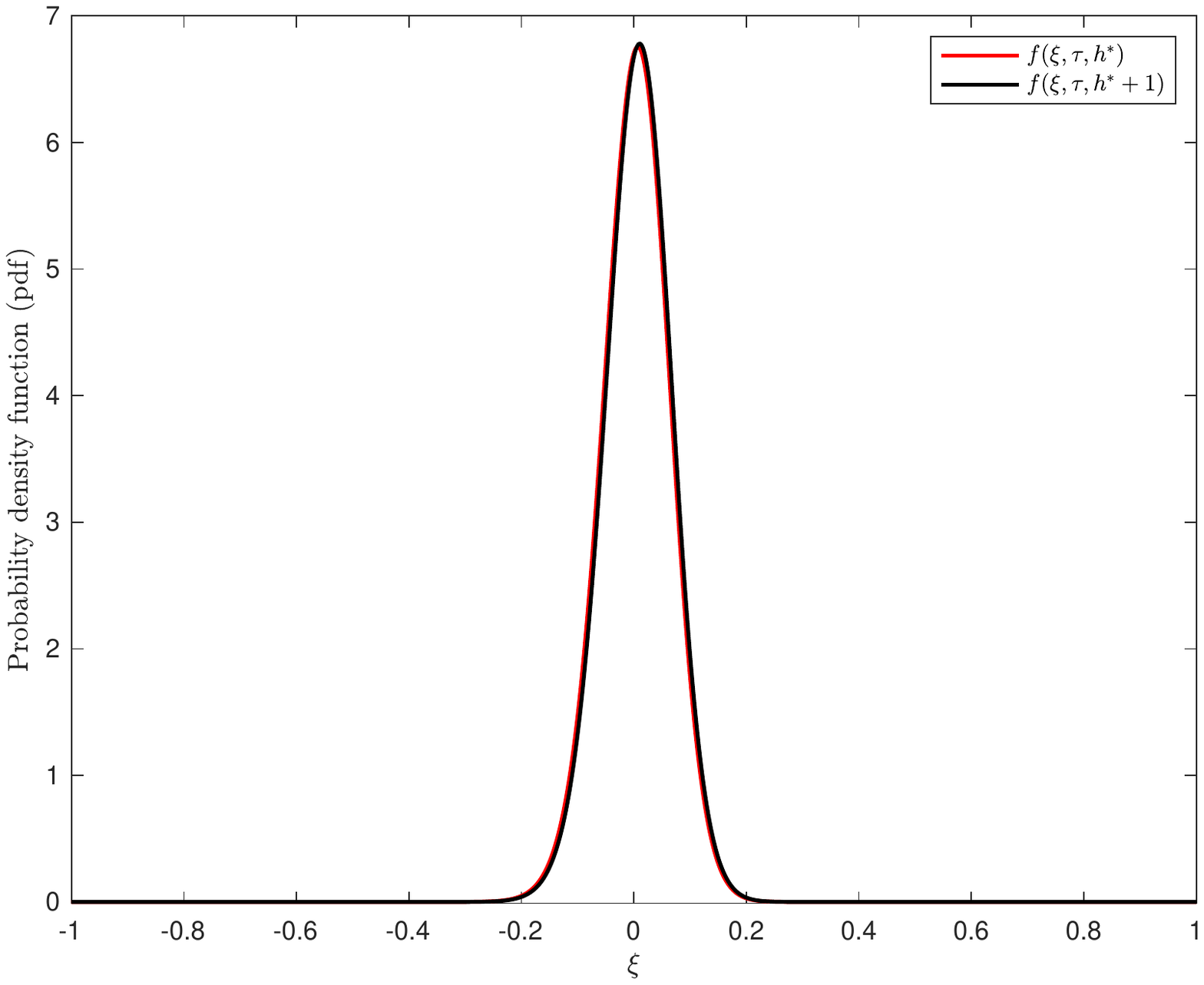}
\vspace{-0.5cm}
     \caption{$\tau=\frac{1}{12}$ years}
         \label{fig71}
  \end{subfigure}
\hspace{-0.2cm}
  \begin{subfigure}[b]{0.33\linewidth}
    \includegraphics[width=\linewidth]{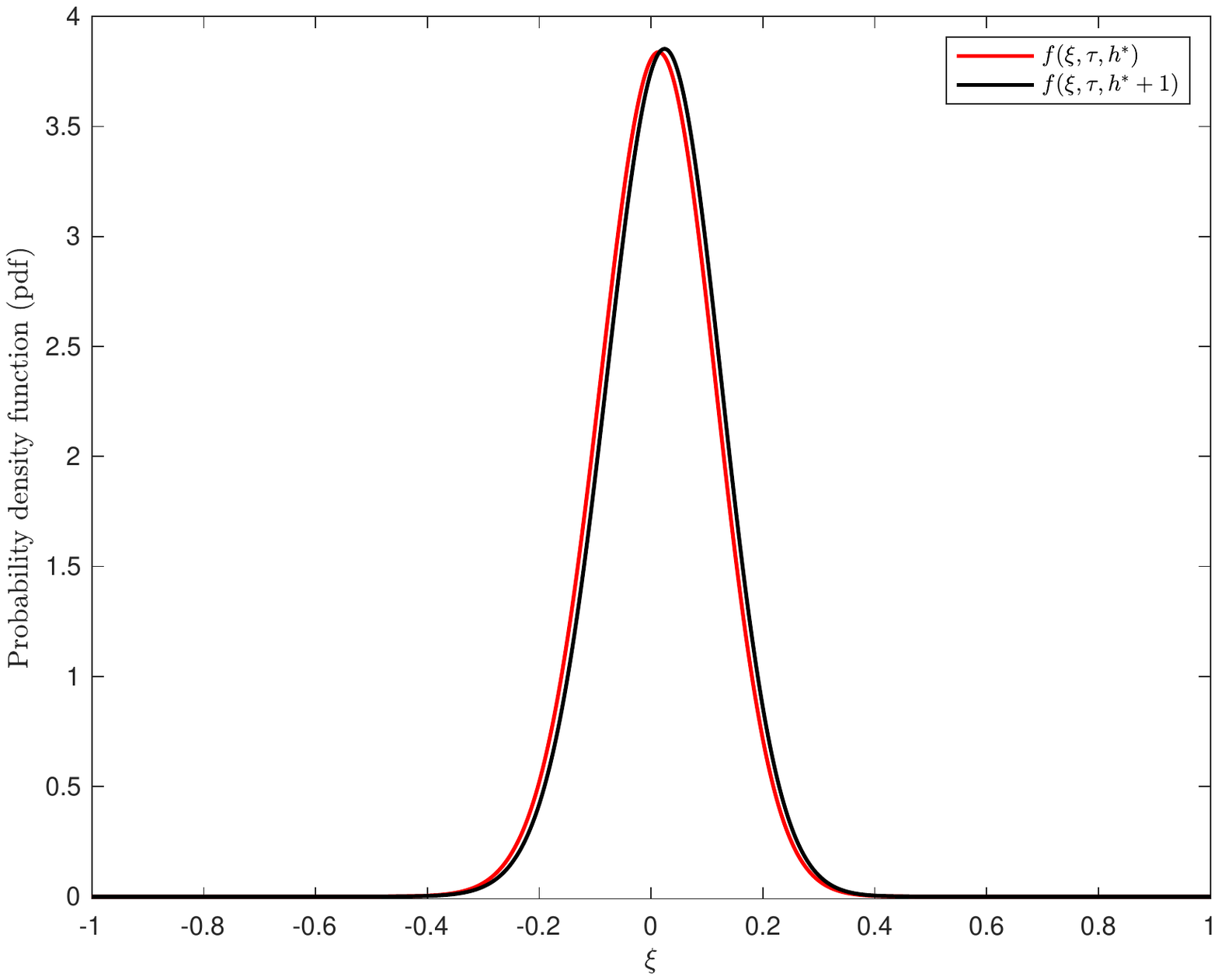}
\vspace{-0.5cm}
     \caption{$\tau=0.25$ years}
         \label{fig72}
          \end{subfigure}
\hspace{-0.2cm}
  \begin{subfigure}[b]{0.33\linewidth}
    \includegraphics[width=\linewidth]{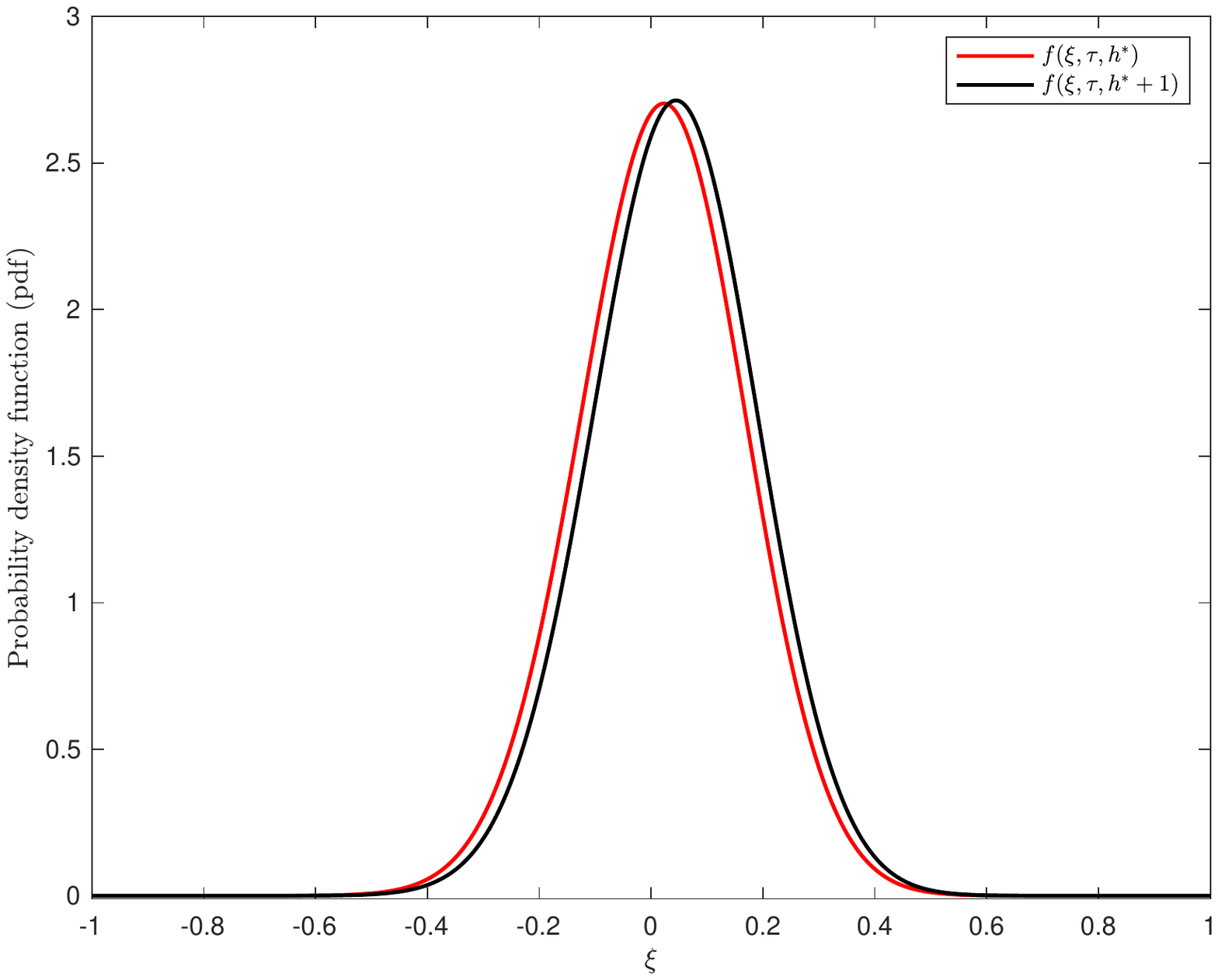}
\vspace{-0.5cm}
     \caption{$\tau=0.5$ years}
         \label{fig73}
          \end{subfigure}
\vspace{-0.5cm}
  \caption{Estimations: $f(\xi,\tau, h^{*})$ versus $f(\xi, \tau, h^{*} + 1)$}
  \label{fig12}
\vspace{-0.7cm}
\end{figure}

\noindent
Based on parameter data from Table \ref{tab1}, we added a 6\% risk-free interest rate and the corresponding Esscher transform parameter ($h^{*}=-2.4448$) in Fig \ref{fig01b}. FRFT technique was used to compute the density and cumulative functions in (\ref{eq:l20a}).\\
\begin{figure}[ht]
\vspace{-0.5cm}
    \centering
\hspace{-1cm}
  \begin{subfigure}[b]{0.33\linewidth}
    \includegraphics[width=\linewidth]{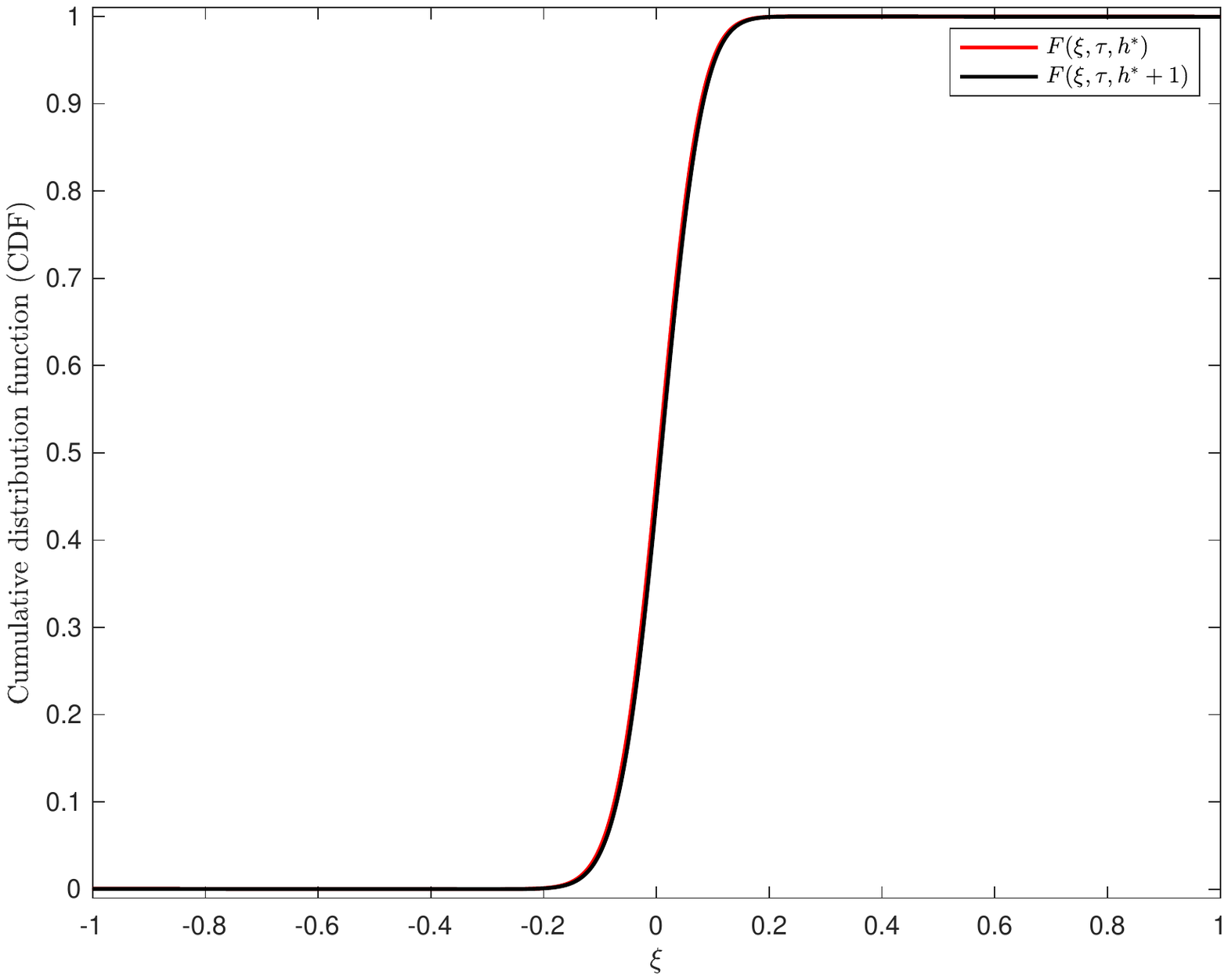}
\vspace{-0.5cm}
     \caption{$\tau=\frac{1}{12}$ years}
         \label{fig71}
  \end{subfigure}
\hspace{-0.2cm}
  \begin{subfigure}[b]{0.33\linewidth}
    \includegraphics[width=\linewidth]{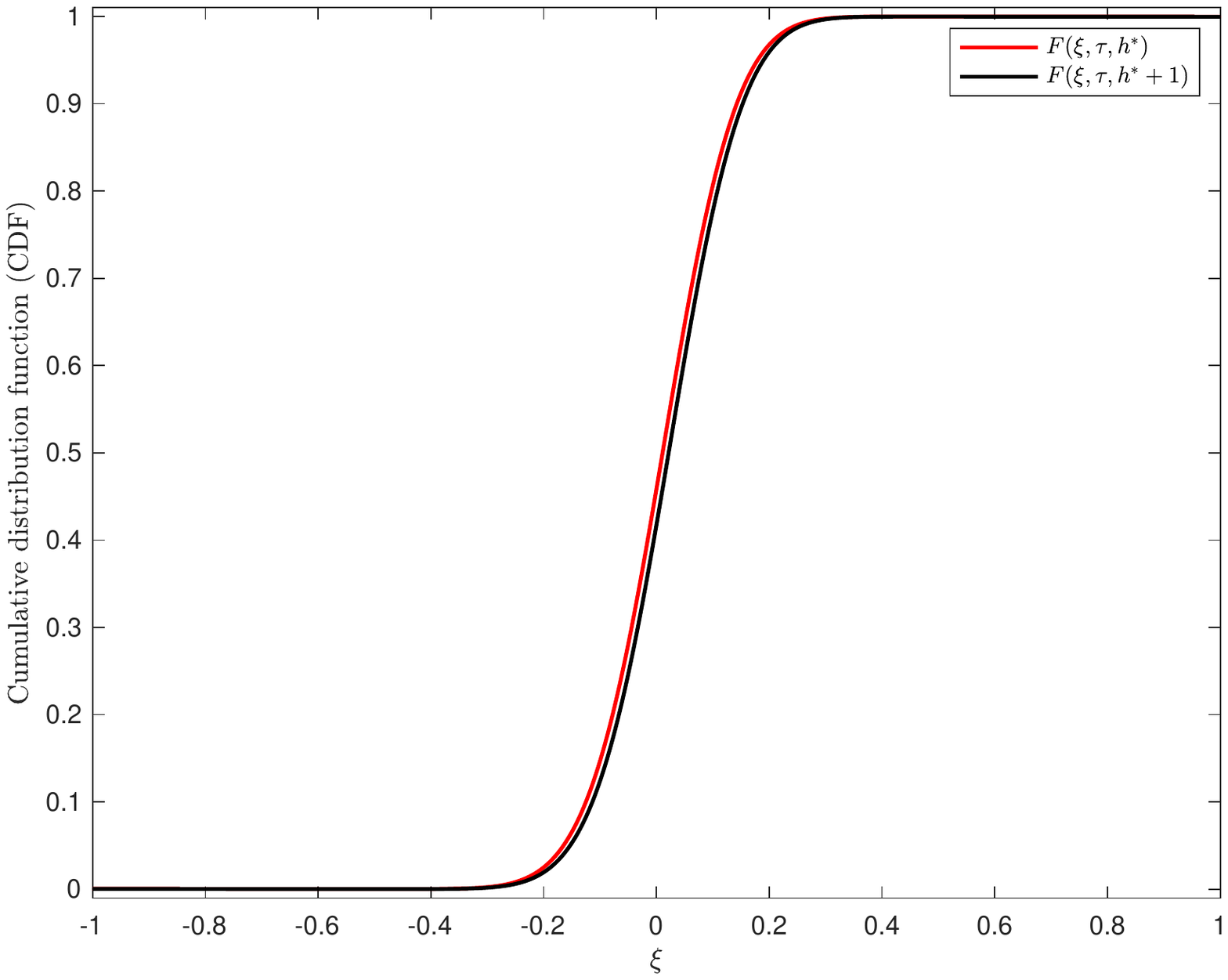}
\vspace{-0.5cm}
     \caption{$\tau=0.25$ years}
         \label{fig72}
          \end{subfigure}
\hspace{-0.2cm}
  \begin{subfigure}[b]{0.33\linewidth}
    \includegraphics[width=\linewidth]{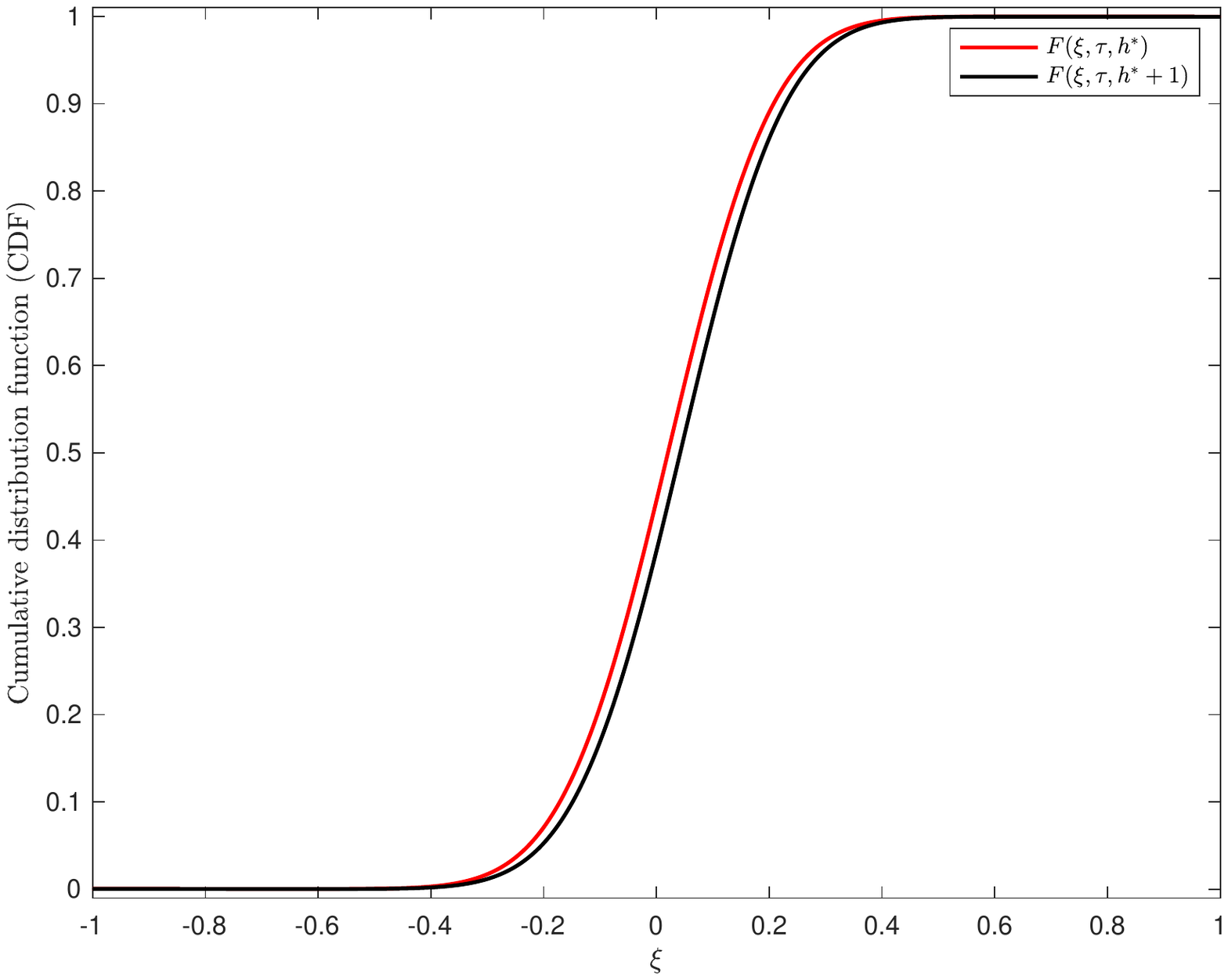}
\vspace{-0.5cm}
     \caption{$\tau=0.5$ years}
         \label{fig73}
          \end{subfigure}
\vspace{-0.5cm}
  \caption{Estimations: $F(\xi,\tau, h^{*})$ versus $F(\xi, \tau, h^{*} + 1)$}
  \label{fig13}
\vspace{-0.5cm}
\end{figure}

\noindent
As shown in Fig \ref{fig12} and \ref{fig13}, the discrepancy between both functions increases with the period ($\tau$). The estimation results will be used to compute the option pricing (\ref{eq:l19}).

%\newpage
\subsection{ Generalized Black-Scholes Formula}
\noindent
Under the EMM, $f(\xi, \tau, h^{*})$ can be written as follows:
 \begin{align}
f(\xi, \tau, h^{*})=\frac{1}{2\pi}\int_{-\infty}^{+\infty}e^{-i \xi z - \tau \varphi(z)}dz \quad \quad   
\varphi(z)=-\Psi_{h^{*}}(z) \label {eq:l20} 
\end{align}
with $\Psi_{h^{*}}(z)$ defined in (\ref{eq:l12}).

 \begin{theorem} \label{lem9}\ \\
 \noindent
 Let $r$ a continuously compounded risk-free rate of interest; $Y=\{Y_{t}\}_{t \geq0} $, a Variance-Gamma Process with parameter $(\mu t, \delta, \sigma, \alpha t, \theta)$; and $(S(0)e^{X_{T}} - K)^{+}$, the terminal payoff for a contingent claim with the expiry date $T$.
Then at time $t <T$, the arbitrage price of the European call option with the strike price $K$ can be written as follows.
\begin{align}
F^{GTS}_{call}(S_{t},t)=\frac{K}{2\pi} \int_{-\infty + i q}^{+\infty + i q}{\frac{e^{\left(i \xi log(\frac{S(t)}{K}) - \tau (r + \varphi(\xi))\right)}}{i \xi(i\xi -1)} d\xi} \label{eq:l21}
\end{align}
Where $\varphi(z)$ is the characteristic exponent of the VG model, $\tau=T-t$, and $q < -1$.
\end{theorem}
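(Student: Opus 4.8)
The plan is to use the Lewis--Carr--Madan Fourier route. By Theorem~\ref{lem71} the Esscher measure $Q=Q^{h^{*}}$ associated with the solution $h^{*}$ of $\Psi_{h^{*}}(1)=r$ is an equivalent martingale measure under which $Y$ is again a GTS L\'evy process, so $X_{\tau}$ has density $f(\cdot,\tau,h^{*})$ with the Esscher-adjusted parameters $(\tau\mu,\beta_{+},\beta_{-},\tau\alpha_{+},\tau\alpha_{-},\tilde\lambda_{+},\tilde\lambda_{-})$ of (\ref{eq:l16}), and $\varphi$ denotes its characteristic exponent, $E^{Q}[e^{izX_{\tau}}]=e^{-\tau\varphi(z)}$. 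First I would write the arbitrage price as the discounted expectation under $Q$, condition on $\mathcal F_{t}$, and invoke stationary independent increments to obtain
\[
F^{GTS}_{call}(S_{t},t)=e^{-r\tau}\,E^{Q}\!\big[(S(t)e^{X_{\tau}}-K)^{+}\big]=e^{-r\tau}\int_{\mathbb R}\big(S(t)e^{x}-K\big)^{+}f(x,\tau,h^{*})\,dx .
\]

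Next I would compute the generalized Fourier transform of the call payoff. Writing $k=\log(K/S(t))$, so that $S(t)e^{k}=K$, and taking $z=u+iq$ with $q<-1$,
\[
\hat g(z):=\int_{\mathbb R}e^{-izx}\big(S(t)e^{x}-K\big)^{+}dx=\int_{k}^{\infty}\!\big(S(t)e^{(1-iz)x}-Ke^{-izx}\big)dx=\frac{Ke^{-izk}}{iz(iz-1)}=\frac{Ke^{\,iz\log(S(t)/K)}}{iz(iz-1)},
\]
the two elementary integrals converging precisely because $\mathrm{Re}(1-iz)=1+q<0$ and $\mathrm{Re}(-iz)=q<0$, and the strict inequality $q<-1$ also keeping the contour off the poles at $z=0$ and $z=-i$. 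The same bound makes $x\mapsto e^{qx}(S(t)e^{x}-K)^{+}$ integrable on $\mathbb R$, so Fourier inversion gives $g(x):=(S(t)e^{x}-K)^{+}=\tfrac1{2\pi}\int_{\mathrm{Im}\,z=q}e^{izx}\hat g(z)\,dz$, and since $|\hat g(u+iq)|=O(u^{-2})$ this line integral is absolutely convergent.

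The crux is then to locate a horizontal line on which $z\mapsto e^{-\tau\varphi(z)}$ is finite and analytic. Because under $Q$ the right-tail decay rate is $\tilde\lambda_{+}=\lambda_{+}-h^{*}$, the generating function $E^{Q}[e^{wX_{\tau}}]$ is finite for $-\tilde\lambda_{-}<\mathrm{Re}\,w<\tilde\lambda_{+}$, hence $E^{Q}[e^{izX_{\tau}}]$ is finite and analytic on $-\tilde\lambda_{+}<\mathrm{Im}\,z<\tilde\lambda_{-}$. Since $h^{*}$ solves $\Psi_{h^{*}}(1)=r$, finiteness of the generating function at $h^{*}+1$ forces $h^{*}<\lambda_{+}-1$, i.e.\ $\tilde\lambda_{+}>1$, so a damping level $q$ with $-\tilde\lambda_{+}<q<-1$ indeed exists, simultaneously meeting the payoff-integrability requirement above. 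Inserting the inversion formula for $g$ into the first integral and exchanging the order of integration,
\[
\int_{\mathbb R}g(x)f(x,\tau,h^{*})\,dx=\frac1{2\pi}\int_{\mathrm{Im}\,z=q}\hat g(z)\Big(\int_{\mathbb R}e^{izx}f(x,\tau,h^{*})\,dx\Big)dz=\frac1{2\pi}\int_{\mathrm{Im}\,z=q}\hat g(z)\,e^{-\tau\varphi(z)}\,dz ,
\]
the Fubini interchange being justified by $\int_{\mathrm{Im}\,z=q}|\hat g(z)|\,|dz|<\infty$ and $\int_{\mathbb R}|e^{izx}|f(x,\tau,h^{*})\,dx=E^{Q}[e^{-qX_{\tau}}]<\infty$ (the latter because $-q<\tilde\lambda_{+}$). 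Multiplying by $e^{-r\tau}$ and substituting $\hat g(z)=Ke^{iz\log(S(t)/K)}/(iz(iz-1))$ yields exactly (\ref{eq:l21}).

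The main obstacle I anticipate is the analyticity step: showing that the generalized transform of the payoff and the GTS characteristic function share a common line of absolute integrability, which reduces to checking that the martingale calibration $\Psi_{h^{*}}(1)=r$ forces $\tilde\lambda_{+}>1$, so that an admissible $q\in(-\tilde\lambda_{+},-1)$ exists; once this is secured, the payoff integral is elementary and the remaining interchange of integrals is routine.
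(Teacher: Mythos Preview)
The paper does not actually include its own proof of this theorem; it defers to the external reference \cite{nzokem2022}. The only in-text ingredients are the payoff Fourier transform and its inversion in Section~4.4, which is precisely the Lewis--Carr--Madan route you implement: your transform $\hat g(z)=Ke^{iz\log(S(t)/K)}/(iz(iz-1))$ is exactly $S(t)\cdot\scrF[g](y,k)$ in the paper's notation with $k=K/S(t)$. Your argument is correct and in fact more complete than what the paper sketches, since you supply the analytic justification---that the martingale calibration $\Psi_{h^{*}}(1)=r$ forces $h^{*}+1<\lambda_{+}$, hence $\tilde\lambda_{+}>1$, so an admissible damping level $q\in(-\tilde\lambda_{+},-1)$ exists on which both the payoff transform and the characteristic function are absolutely integrable---and the Fubini step, neither of which the paper addresses.
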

\noindent
See  \cite{nzokem2022} for theorem \ref{lem9} proof \\

 \noindent
To compute $F^{GTS}_{call}$ in (\ref{eq:l21}), the numerical integration technique, called the  Newton-Cotes formula of 12 degrees, will be implemented as follows.
\begin{align}
g^{x}(\xi)= \frac{exp\left[{i(\xi+iq)x-\tau (r -\Psi_{h^{*}}(\xi+iq))}\right]}{2\pi i(\xi+iq)(i(\xi+iq)-1)}  \quad 
\widehat{F^{GTS}_{call}}=\frac{b}{n}\sum_{p=0}^{\frac{n}{Q} - 1} \sum_{j=0}^{Q} W_{j} g^{log(\frac{S(t)}{K})}(\xi_{Qp+j}) \label {eq:l22} 
\end{align}

 \noindent
 In order to perform the integral (\ref{eq:l22}), the following parameter values were used: $a=0$, $b=20$, $Q=12$, $n=5000Q$, $n_{0}=5000$. See \cite{aubain2020, nzokem_2021} for more details on the choice of parameter values and the twelve-point rule Composite Newton--Cote's methodology.
 
 \subsection{Parameter $q$ Evaluation}
\noindent
Let us consider the stock or index price $S=S_{0} e^{Y}$ and the strike price $K$; the call payoff can be written as follows.
  \begin{align}
 (S(0)e^{Y_{T}} - K)^{+}&= S(t)(e^{Y_{\tau}} - k)^{+}=S(t)g(Y_{\tau},k) \   \  \   \   k=\frac{K}{S_{t}} \label {eq:l23}  
\end{align}
The Fourier transform of the call payoff  in (\ref{eq:l23}) can be derived as follows
  \begin{align}
  \scrF[g] (y,k)&=\int_{0}^{+\infty}e^{-iyx}g(x,k)dx= \frac{ke^{-iylog(k)}}{iy(iy -1)} \quad  \hbox{for $\Im(y) < -1$} 
\end{align}
\noindent
Call payoff in (\ref{eq:l23})  can be recovered from the inverse of Fourier and labeled $\check{g}(x,k)$ with 
  \begin{align}
  \check{g}(x,k)=\frac{1}{2\pi}\int_{-\infty + iq}^{+\infty + iq}e^{iyx}\scrF[g] (y,k)dy \hspace{5mm}  \hbox{for $q < -1$} 
 \label {eq:l24}
\end{align}

The payoff ($\check{g}(x,k)$) in (\ref{eq:l24}) depends on the $q$ parameter as shown in Fig \ref{fig66}. The inverse Fourier ($\check{g}(x,k)$) produces poor results if $q=-2$ (color red).\\

\noindent
To find the $q$ value with a high level of accuracy, we define the error function ($ER(k,q)$) between the call payoff in (\ref{eq:l23}) and the inverse Fourier payoff in (\ref{eq:l24}).

  \begin{align}
ER(k,q)=\sqrt{ \frac{1}{m}\sum_{j=1}^{m} \left[ ( e^{x_{j}} - k)^{+} -  \check{g}(x_{j},k) \right]^2}  \hspace{5mm}  \hbox{with $-M \leq x_{j}\leq M$} 
 \label {eq:l25}
\end{align}

\begin{figure}[ht]
\vspace{-0.6cm}
    \centering
\hspace{-0.3cm}
  \begin{subfigure}[b]{0.42\linewidth}
    \includegraphics[width=\linewidth]{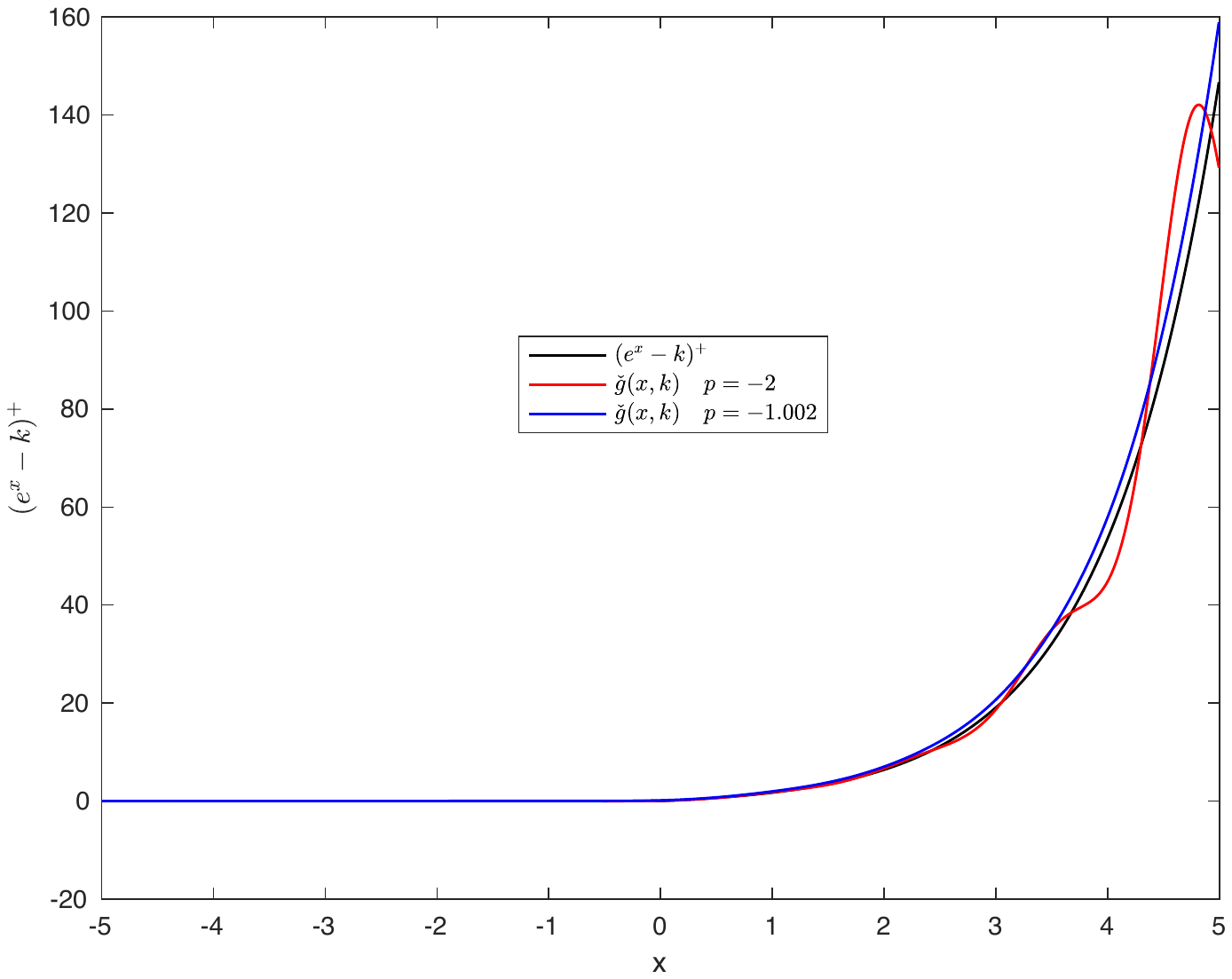}
\vspace{-0.5cm}
     \caption{$(X - k)^{+}$ versus $\check{g}(x,k)$}
         \label{fig66}
  \end{subfigure}
\hspace{-0.4cm}
  \begin{subfigure}[b]{0.46 \linewidth}
    \includegraphics[width=\linewidth]{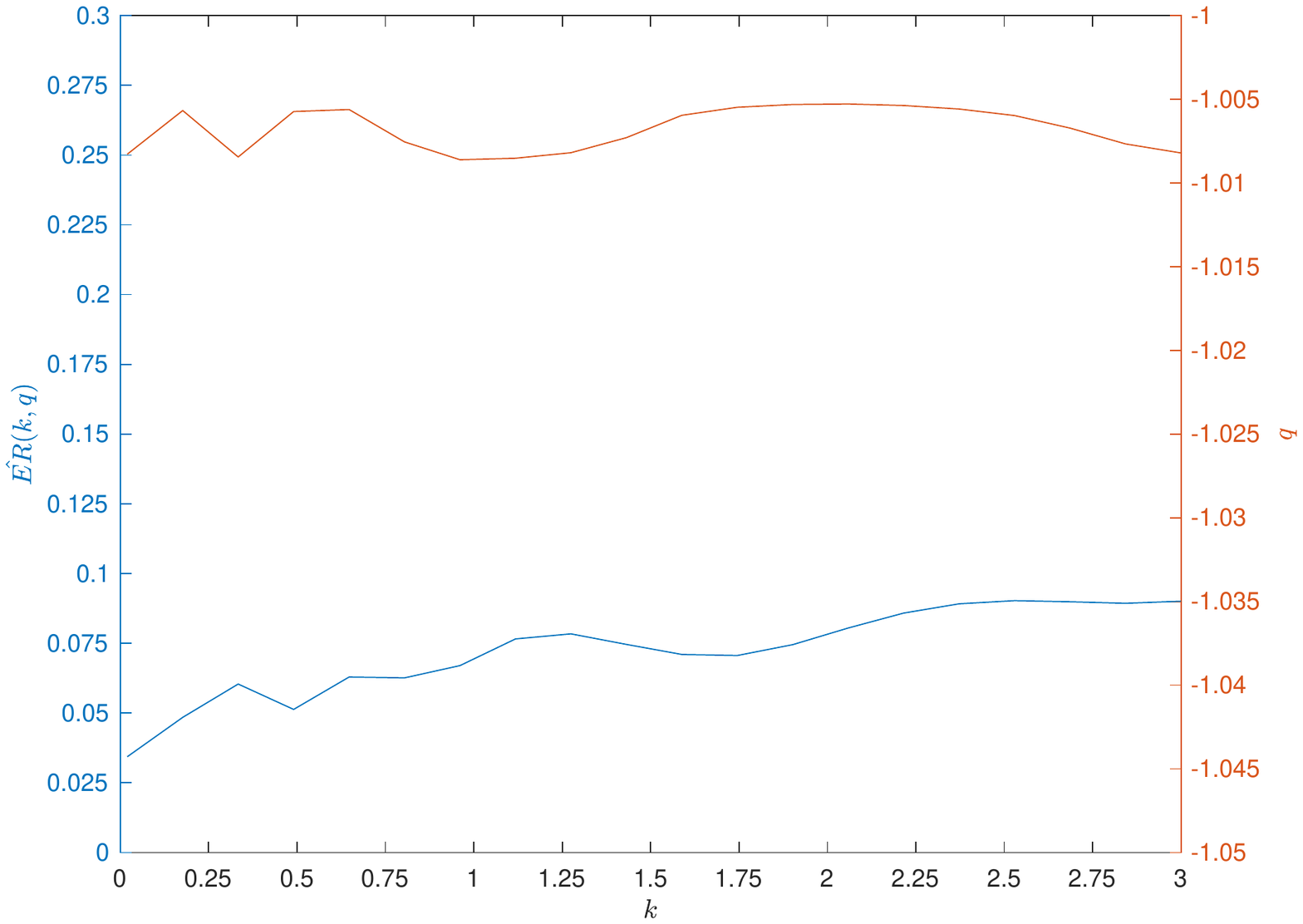}
\vspace{-0.5cm}
     \caption{$ER(k,q)$ and Optimal parameter (q)}
         \label{fig70}
          \end{subfigure}
\vspace{-0.5cm}
  \caption{Optimal parameter ($q$) and minimum Error value ($ER(k,q)$)}
  \label{fig66a}
\vspace{-0.6cm}
\end{figure}

\noindent
Fig \ref{fig70} displays (in color blue) the $ER(k,q)$ minimum value as a function of the strike price $k$; and the optimal correspondent parameter $q$ as a function of the strike price $k$ (in color red). Both graphs display almost a constant function with respect to the strike price.
\section{Empirical Analysis: A case of S\&P 500 return}
\noindent
Based on parameter data from Table \ref{tab1},  we added a 6\% risk-free interest rate and the corresponding Esscher transform parameter ($h^{*}=-2.4448$) in Fig \ref{fig01b}. The  GTS  option pricing will be computed across maturity and option moneyness using Extended and Generalized Black-Scholes formulas. The closed-form Black-Scholes model \cite{hull2003options} was added to the analysis as a benchmark.
\begin{align}
F^{BS}_{call}(S_{t},\tau)&=S_{t}N(d_{1}) - K e^{-r\tau}N(d_{2}) \label {eq:l25}\\
d_{1}=\frac{Ln(\frac{S_{t}}{K}) + (r+\frac{1}{2}\sigma^{*2})\tau}{\sigma^{*} \sqrt{\tau}}
 \quad \quad d_{2}&=d_{1} -\sigma^{*}\sqrt{\tau} \quad \quad N(x)=\frac{1}{\sqrt{2\pi}}\int_{-\infty}^x\exp\left( -\frac{t^2}{2}\ \right)\,dt  \label {eq:l26}
\end{align}

\noindent
The variance $\sigma^{*}=0.2077$  is the annualized variance computed from the variance in Table \ref{tab2}.\\
 Moneyness describes the intrinsic value of an option in its current state, which indicates whether the option would make money if exercised immediately. Option moneyness can be classified into three categories: At-the-money (ATM) option ($k=\frac{S_{t}}{K}=1$), Out-of-the-money (OTM) option ($k=\frac{S_{t}}{K}<1$), and In-the-money (ITM) option ($k=\frac{S_{t}}{K}>1$). \\
 \noindent
 On August 15, 2023, the S\&P 500 market price closed at $\$4,437.86$. We compute the GTS call option price on S\&P 500 with the spot price ($S_{0}$) $\$4,437.86$.  Table \ref{tab3} summarizes the computation of price as a function of time to maturity ($\tau$) and option moneyness ($k=\frac{S_{t}}{K}$).
\begin{table}[ht]
\centering
\vspace{-0.6cm}
%\centering
\caption{European Call Price on S\&P 500 data}
%\vspace{-0.3cm}
\label{tab3}
\setlength\extrarowheight{0.5pt} % for a less cramped look
\setlength\tabcolsep{0.5pt}  
%\resizebox{\columnwidth}{!}{%
%\resizebox{15cm}{!}{%
\begin{tabular}{@{}c|c|ccc|ccc|ccc|ccc@{}}
\toprule
\multirow{1}{*}{\textbf{Strike Price}} & \multirow{1}{*}{\textbf{$\frac{S_{t}}{K}$}} & \multirow{1}{*}{\textbf{BSM}} & \multirow{1}{*}{\textbf{GTS(\ref{eq:l21})}}& \multirow{1}{*}{\textbf{GTS(\ref{eq:l16})}} & \multirow{1}{*}{\textbf{BSM}} & \multirow{1}{*}{\textbf{GTS(\ref{eq:l21})}} & \multirow{1}{*}{\textbf{GTS(\ref{eq:l16})}} & \multirow{1}{*}{\textbf{BSM}} & \multirow{1}{*}{\textbf{GTS(\ref{eq:l21})}} & \multirow{1}{*}{\textbf{GTS(\ref{eq:l16})}} & \multirow{1}{*}{\textbf{BSM}} & \multirow{1}{*}{\textbf{GTS(\ref{eq:l21})}} & \multirow{1}{*}{\textbf{GTS(\ref{eq:l16})}} \\ \midrule
\multicolumn{2}{c|}{\textbf{Period ( in yrs)}} & \multicolumn{3}{c|}{\textbf{0.25}} & \multicolumn{3}{c|}{\textbf{0.5}} & \multicolumn{3}{c|}{\textbf{0.75}} & \multicolumn{3}{c|}{\textbf{1}} \\ \midrule
\multirow{1}{*}{2689.61} & \multirow{1}{*}{1.65} & \multirow{1}{*}{1788.29} & \multirow{1}{*}{1788.30} & \multirow{1}{*}{1788.29} & \multirow{1}{*}{1827.76} & \multirow{1}{*}{1827.78} & \multirow{1}{*}{1827.78} & \multirow{1}{*}{1866.80} & \multirow{1}{*}{1866.91} & \multirow{1}{*}{1866.90} & \multirow{1}{*}{1905.61} & \multirow{1}{*}{1905.83} & \multirow{1}{*}{1905.83} \\ \midrule
\multirow{1}{*}{2773.66} & \multirow{1}{*}{1.60} & \multirow{1}{*}{1705.49} & \multirow{1}{*}{1705.50} & \multirow{1}{*}{1705.49} & \multirow{1}{*}{1746.22} & \multirow{1}{*}{1746.26} & \multirow{1}{*}{1746.25} & \multirow{1}{*}{1786.62} & \multirow{1}{*}{1786.78} & \multirow{1}{*}{1786.77} & \multirow{1}{*}{1826.92} & \multirow{1}{*}{1827.23} & \multirow{1}{*}{1827.22} \\ \midrule
\multirow{1}{*}{2863.14} & \multirow{1}{*}{1.55} & \multirow{1}{*}{1617.35} & \multirow{1}{*}{1617.36} & \multirow{1}{*}{1617.35} & \multirow{1}{*}{1659.44} & \multirow{1}{*}{1659.52} & \multirow{1}{*}{1659.51} & \multirow{1}{*}{1701.40} & \multirow{1}{*}{1701.63} & \multirow{1}{*}{1701.63} & \multirow{1}{*}{1743.41} & \multirow{1}{*}{1743.83} & \multirow{1}{*}{1743.83} \\ \midrule
\multirow{1}{*}{2958.57} & \multirow{1}{*}{1.50} & \multirow{1}{*}{1523.34} & \multirow{1}{*}{1523.35} & \multirow{1}{*}{1523.34} & \multirow{1}{*}{1566.95} & \multirow{1}{*}{1567.08} & \multirow{1}{*}{1567.08} & \multirow{1}{*}{1610.73} & \multirow{1}{*}{1611.07} & \multirow{1}{*}{1611.07} & \multirow{1}{*}{1654.74} & \multirow{1}{*}{1655.31} & \multirow{1}{*}{1655.31} \\ \midrule
\multirow{1}{*}{3060.59} & \multirow{1}{*}{1.45} & \multirow{1}{*}{1422.84} & \multirow{1}{*}{1422.87} & \multirow{1}{*}{1422.86} & \multirow{1}{*}{1468.23} & \multirow{1}{*}{1468.44} & \multirow{1}{*}{1468.44} & \multirow{1}{*}{1514.20} & \multirow{1}{*}{1514.70} & \multirow{1}{*}{1514.70} & \multirow{1}{*}{1560.59} & \multirow{1}{*}{1561.35} & \multirow{1}{*}{1561.35} \\ \midrule
\multirow{1}{*}{3169.90} & \multirow{1}{*}{1.40} & \multirow{1}{*}{1315.19} & \multirow{1}{*}{1315.24} & \multirow{1}{*}{1315.24} & \multirow{1}{*}{1362.75} & \multirow{1}{*}{1363.11} & \multirow{1}{*}{1363.11} & \multirow{1}{*}{1411.46} & \multirow{1}{*}{1412.17} & \multirow{1}{*}{1412.17} & \multirow{1}{*}{1460.70} & \multirow{1}{*}{1461.70} & \multirow{1}{*}{1461.70} \\ \midrule
\multirow{1}{*}{3287.30} & \multirow{1}{*}{1.35} & \multirow{1}{*}{1199.63} & \multirow{1}{*}{1199.76} & \multirow{1}{*}{1199.75} & \multirow{1}{*}{1250.06} & \multirow{1}{*}{1250.64} & \multirow{1}{*}{1250.64} & \multirow{1}{*}{1302.26} & \multirow{1}{*}{1303.25} & \multirow{1}{*}{1303.24} & \multirow{1}{*}{1354.91} & \multirow{1}{*}{1356.20} & \multirow{1}{*}{1356.20} \\ \midrule
\multirow{1}{*}{3413.74} & \multirow{1}{*}{1.30} & \multirow{1}{*}{1075.41} & \multirow{1}{*}{1075.69} & \multirow{1}{*}{1075.69} & \multirow{1}{*}{1129.91} & \multirow{1}{*}{1130.79} & \multirow{1}{*}{1130.79} & \multirow{1}{*}{1186.56} & \multirow{1}{*}{1187.87} & \multirow{1}{*}{1187.87} & \multirow{1}{*}{1243.25} & \multirow{1}{*}{1244.87} & \multirow{1}{*}{1244.87} \\ \midrule
\multirow{1}{*}{3550.29} & \multirow{1}{*}{1.25} & \multirow{1}{*}{941.96} & \multirow{1}{*}{942.52} & \multirow{1}{*}{942.52} & \multirow{1}{*}{1002.39} & \multirow{1}{*}{1003.67} & \multirow{1}{*}{1003.66} & \multirow{1}{*}{1064.63} & \multirow{1}{*}{1066.33} & \multirow{1}{*}{1066.32} & \multirow{1}{*}{1126.01} & \multirow{1}{*}{1127.99} & \multirow{1}{*}{1127.99} \\ \midrule
\multirow{1}{*}{3698.22} & \multirow{1}{*}{1.20} & \multirow{1}{*}{799.32} & \multirow{1}{*}{800.34} & \multirow{1}{*}{800.34} & \multirow{1}{*}{868.33} & \multirow{1}{*}{870.04} & \multirow{1}{*}{870.03} & \multirow{1}{*}{937.30} & \multirow{1}{*}{939.38} & \multirow{1}{*}{939.38} & \multirow{1}{*}{1003.88} & \multirow{1}{*}{1006.21} & \multirow{1}{*}{1006.21} \\ \midrule
\multirow{1}{*}{3859.01} & \multirow{1}{*}{1.15} & \multirow{1}{*}{649.08} & \multirow{1}{*}{650.65} & \multirow{1}{*}{650.65} & \multirow{1}{*}{729.59} & \multirow{1}{*}{731.69} & \multirow{1}{*}{731.69} & \multirow{1}{*}{806.08} & \multirow{1}{*}{808.50} & \multirow{1}{*}{808.49} & \multirow{1}{*}{878.05} & \multirow{1}{*}{880.69} & \multirow{1}{*}{880.69} \\ \midrule
\multirow{1}{*}{4034.42} & \multirow{1}{*}{1.10} & \multirow{1}{*}{495.72} & \multirow{1}{*}{497.69} & \multirow{1}{*}{497.69} & \multirow{1}{*}{589.51} & \multirow{1}{*}{591.85} & \multirow{1}{*}{591.85} & \multirow{1}{*}{673.41} & \multirow{1}{*}{676.02} & \multirow{1}{*}{676.02} & \multirow{1}{*}{750.36} & \multirow{1}{*}{753.20} & \multirow{1}{*}{753.20} \\ \midrule
\multirow{1}{*}{4226.53} & \multirow{1}{*}{1.05} & \multirow{1}{*}{347.77} & \multirow{1}{*}{349.64} & \multirow{1}{*}{349.63} & \multirow{1}{*}{453.12} & \multirow{1}{*}{455.39} & \multirow{1}{*}{455.39} & \multirow{1}{*}{542.70} & \multirow{1}{*}{545.31} & \multirow{1}{*}{545.31} & \multirow{1}{*}{623.36} & \multirow{1}{*}{626.25} & \multirow{1}{*}{626.25} \\ \midrule
\multirow{1}{*}{4437.86} & \multirow{1}{*}{1.00} & \multirow{1}{*}{217.36} & \multirow{1}{*}{218.45} & \multirow{1}{*}{218.45} & \multirow{1}{*}{326.85} & \multirow{1}{*}{328.69} & \multirow{1}{*}{328.69} & \multirow{1}{*}{418.34} & \multirow{1}{*}{420.69} & \multirow{1}{*}{420.68} & \multirow{1}{*}{500.33} & \multirow{1}{*}{503.05} & \multirow{1}{*}{503.05} \\ \midrule
\multirow{1}{*}{4671.43} & \multirow{1}{*}{0.95} & \multirow{1}{*}{116.48} & \multirow{1}{*}{116.53} & \multirow{1}{*}{116.53} & \multirow{1}{*}{217.59} & \multirow{1}{*}{218.71} & \multirow{1}{*}{218.71} & \multirow{1}{*}{305.24} & \multirow{1}{*}{307.06} & \multirow{1}{*}{307.06} & \multirow{1}{*}{385.05} & \multirow{1}{*}{387.40} & \multirow{1}{*}{387.40} \\ \midrule
\multirow{1}{*}{4930.96} & \multirow{1}{*}{0.90} & \multirow{1}{*}{51.10} & \multirow{1}{*}{50.51} & \multirow{1}{*}{50.51} & \multirow{1}{*}{130.98} & \multirow{1}{*}{131.32} & \multirow{1}{*}{131.32} & \multirow{1}{*}{208.12} & \multirow{1}{*}{209.26} & \multirow{1}{*}{209.26} & \multirow{1}{*}{281.52} & \multirow{1}{*}{283.31} & \multirow{1}{*}{283.31} \\ \midrule
\multirow{1}{*}{5221.01} & \multirow{1}{*}{0.85} & \multirow{1}{*}{17.38} & \multirow{1}{*}{16.80} & \multirow{1}{*}{16.80} & \multirow{1}{*}{69.54} & \multirow{1}{*}{69.31} & \multirow{1}{*}{69.31} & \multirow{1}{*}{130.53} & \multirow{1}{*}{131.00} & \multirow{1}{*}{130.99} & \multirow{1}{*}{193.32} & \multirow{1}{*}{194.46} & \multirow{1}{*}{194.46} \\ \midrule
\multirow{1}{*}{5547.33} & \multirow{1}{*}{0.80} & \multirow{1}{*}{4.29} & \multirow{1}{*}{4.02} & \multirow{1}{*}{4.02} & \multirow{1}{*}{31.60} & \multirow{1}{*}{31.15} & \multirow{1}{*}{31.15} & \multirow{1}{*}{73.87} & \multirow{1}{*}{73.82} & \multirow{1}{*}{73.82} & \multirow{1}{*}{122.96} & \multirow{1}{*}{123.47} & \multirow{1}{*}{123.47} \\ \midrule
\multirow{1}{*}{5917.15} & \multirow{1}{*}{0.75} & \multirow{1}{*}{0.71} & \multirow{1}{*}{0.65} & \multirow{1}{*}{0.65} & \multirow{1}{*}{11.84} & \multirow{1}{*}{11.47} & \multirow{1}{*}{11.47} & \multirow{1}{*}{36.83} & \multirow{1}{*}{36.53} & \multirow{1}{*}{36.53} & \multirow{1}{*}{71.17} & \multirow{1}{*}{71.22} & \multirow{1}{*}{71.22} \\ \midrule
\multirow{1}{*}{6339.80} & \multirow{1}{*}{0.70} & \multirow{1}{*}{0.07} & \multirow{1}{*}{0.06} & \multirow{1}{*}{0.06} & \multirow{1}{*}{3.49} & \multirow{1}{*}{3.30} & \multirow{1}{*}{3.30} & \multirow{1}{*}{15.70} & \multirow{1}{*}{15.39} & \multirow{1}{*}{15.39} & \multirow{1}{*}{36.69} & \multirow{1}{*}{36.49} & \multirow{1}{*}{36.49} \\ \midrule
\multirow{1}{*}{6827.48} & \multirow{1}{*}{0.65} & \multirow{1}{*}{0.00} & \multirow{1}{*}{0.00} & \multirow{1}{*}{0.00} & \multirow{1}{*}{0.77} & \multirow{1}{*}{0.70} & \multirow{1}{*}{0.70} & \multirow{1}{*}{5.51} & \multirow{1}{*}{5.31} & \multirow{1}{*}{5.31} & \multirow{1}{*}{16.38} & \multirow{1}{*}{16.13} & \multirow{1}{*}{16.13} \\ \midrule
\multirow{1}{*}{7396.43} & \multirow{1}{*}{0.60} & \multirow{1}{*}{0.00} & \multirow{1}{*}{0.00} & \multirow{1}{*}{0.00} & \multirow{1}{*}{0.12} & \multirow{1}{*}{0.10} & \multirow{1}{*}{0.10} & \multirow{1}{*}{1.52} & \multirow{1}{*}{1.43} & \multirow{1}{*}{1.43} & \multirow{1}{*}{6.11} & \multirow{1}{*}{5.93} & \multirow{1}{*}{5.93} \\ \midrule
\multirow{1}{*}{8068.84} & \multirow{1}{*}{0.55} & \multirow{1}{*}{0.00} & \multirow{1}{*}{0.00} & \multirow{1}{*}{0.00} & \multirow{1}{*}{0.01} & \multirow{1}{*}{0.01} & \multirow{1}{*}{0.01} & \multirow{1}{*}{0.31} & \multirow{1}{*}{0.28} & \multirow{1}{*}{0.28} & \multirow{1}{*}{1.82} & \multirow{1}{*}{1.73} & \multirow{1}{*}{1.73} \\ \bottomrule
\end{tabular}%
%}
\footnotesize \ \ (\ref{eq:l21}): 12-point rule Composite Newton-cotes Quadrature \  \  (\ref{eq:l16}):  FRFT-CDF 
\vspace{-0.3cm}
\end{table}

\newpage	
																										
 \noindent
The Fractional Fourier Transform (FRFT) algorithm is used to perform the  GTS cumulative function in the extended Black-Scholes formula (\ref{eq:l16}), whereas the twelve-point rule Composite Newton--Cotes Quadrature algorithm is used to perform the Generalized Black-Scholes Formula (\ref{eq:l21}). As expected,  both algorithms produce the same estimation for European option prices at two decimal places.\\
 \noindent
To generalize the analysis and account for a large range of option moneyness and maturity, The error (\ref{eq:l27}) was computed as the difference between VG and BS option~prices:
\begin{align}
Error(k,\tau)= F^{GTS}_{call}(S_{t},\tau) - F^{BS}_{call}(S_{t},\tau)  \quad \quad  \hbox{$(k=\frac{S_{t}}{K})$} \label {eq:l27}
\end{align}
\noindent
Fig~\ref{fig1ab} displays the error ($Error(k,\tau)$) as a function of the time to maturity ($\tau$) and the option moneyness ($k$). The spot price ($S_{t}$) is a constant, and the option moneyness depends on the strike price. The Black-Scholes (BS) model and GTS European option produce different pricing results. Compared to the option price under the GTS distribution, the BS model is underpriced for the at-the-money (ATM) and the in-the-money (ITM) options as shown in Fig~\ref{fig1ab}. 
 \begin{figure}[ht]
\vspace{-0.5cm}
    \centering
\hspace{-0.6cm}
  \begin{subfigure}[b]{0.47\linewidth}
    \includegraphics[width=\linewidth]{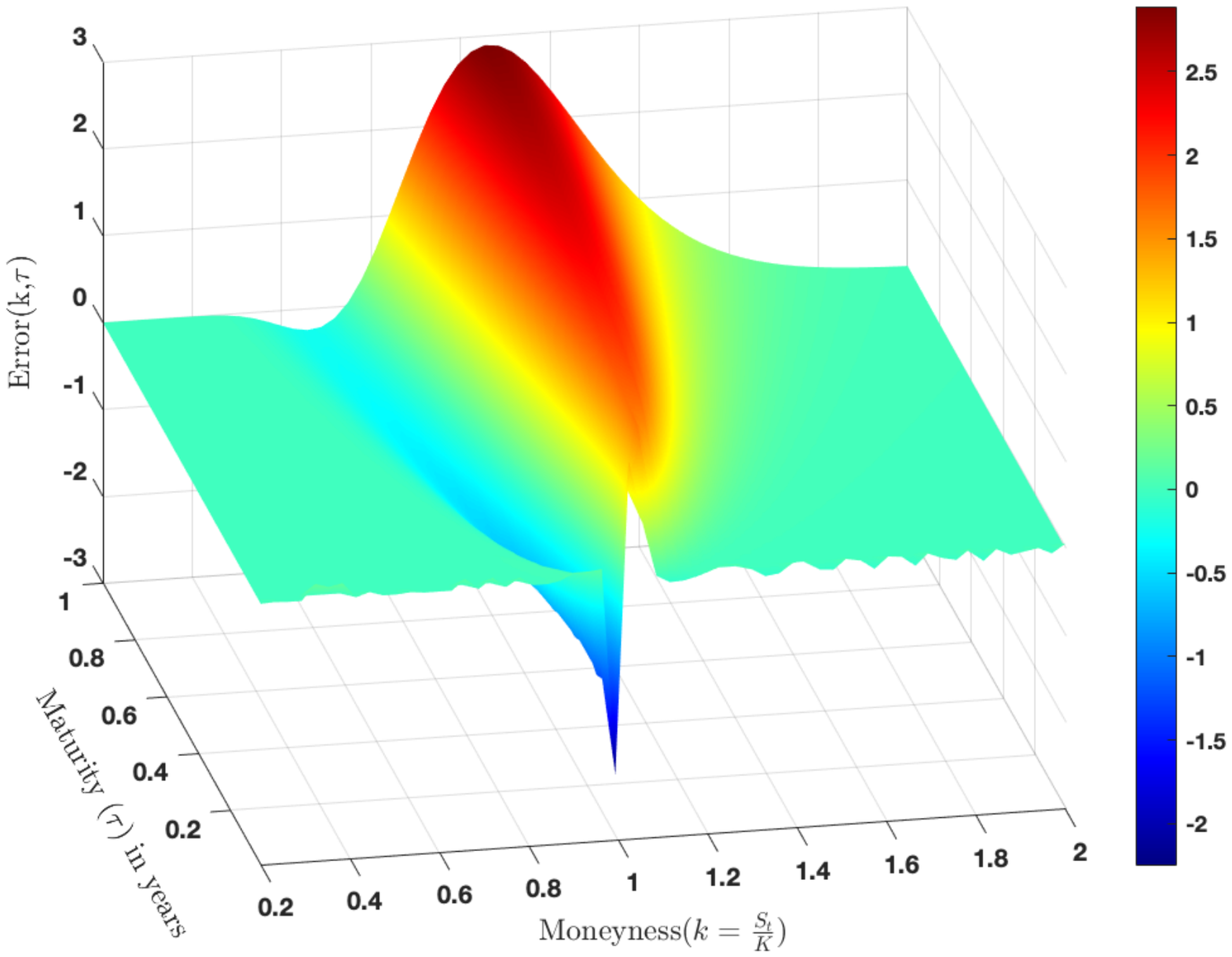}
\vspace{-0.3cm}
     \caption{Error(k,$\tau$)}
         \label{fig101a}
  \end{subfigure}
\hspace{-0.2cm}
  \begin{subfigure}[b]{0.47\linewidth}
    \includegraphics[width=\linewidth]{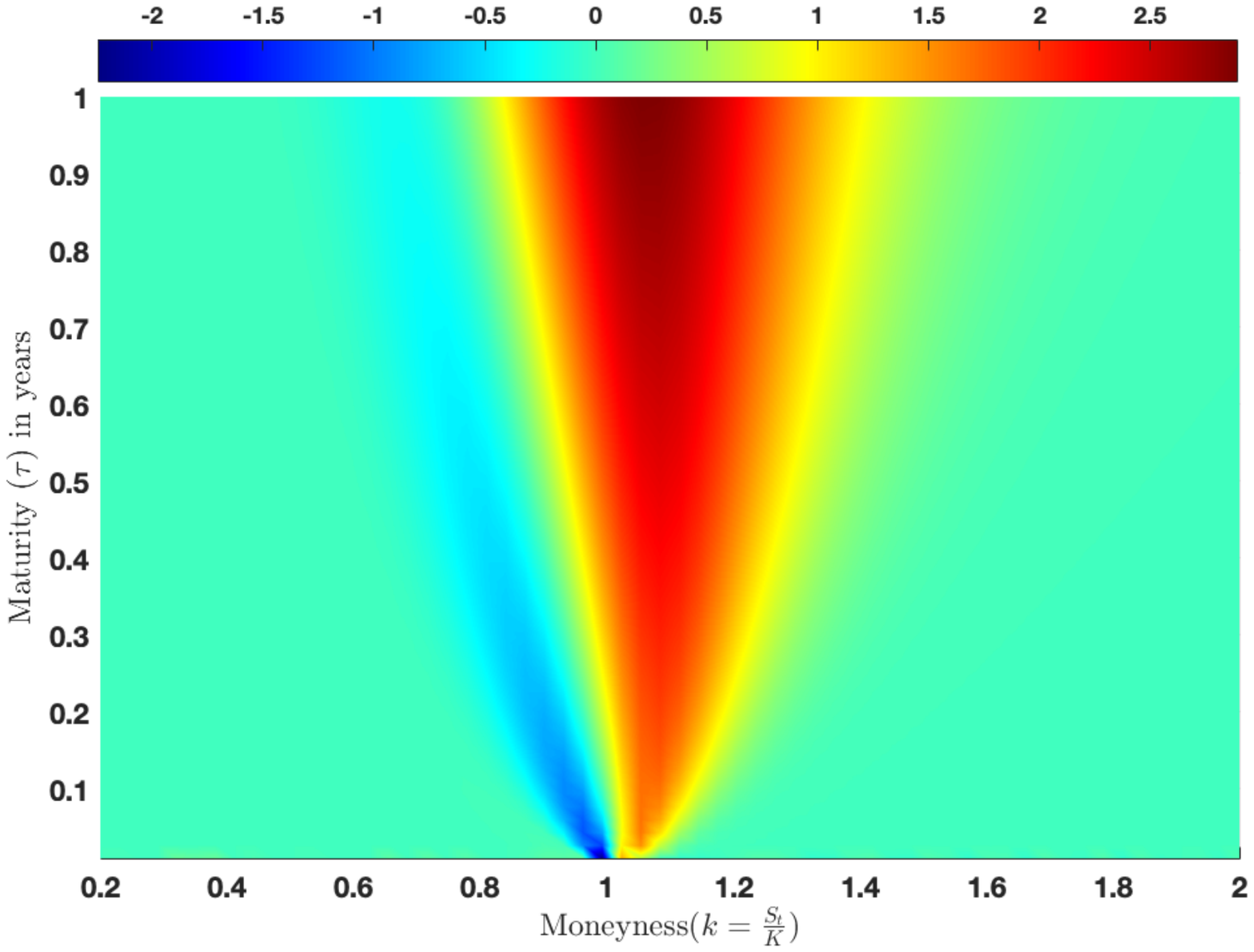}
\vspace{-0.5cm}
     \caption{Error(k,$\tau$) (top view)}
         \label{fig101b}
          \end{subfigure}
\vspace{-0.5cm}
  \caption{Combined Effects of time to Maturity and Option Moneyness}
  \label{fig1ab}
\vspace{-0.6cm}
\end{figure}

\noindent 
For the out-the-money (OTM) option, the Black-Scholes (BS) model is slightly overpriced (dark blue color in Fig~\ref{fig1ab}).  However, the BS model and GTS European options yield the same option price for the deep out-of-the-money (OTM) and the deep-in-the-money (ITM) options.\\
\noindent 
The findings can be compared to the European option pricing under the Variance-Gamma process \cite{nzokem2022, mozumder2015revisiting}. In both studies, we have the same pattern. the BS model is underpriced for the in-the-money (ITM) option, whereas the BS model is overpriced for the out-the-money (ITM) option. However, The error ($Error(k,\tau)$) between the BS model and GTS European options has a slightly different magnitude compared to the magnitude error($Error(k,\tau)$) between the BS model and option pricing under the Variance-Gamma process \cite{nzokem2022}.
%\clearpage
%\newpage
\section {Conclusion} 
\noindent 
The paper examines the pricing of the European option when the log asset price follows a rich class of Generalized Tempered Stable (GTS) distribution. The S\&P 500 data is used to illustrate the pricing results. The Equivalent Martingale Measure (EMM)  of the GTS distribution exists, and the Esscher transform method preserves the structure of the GTS distribution. \\
The extended Black-Scholes formula was computed based on the cumulative distribution function (CDF) generated by the Fractional Fast Fourier (FRFT) algorithms. In addition, The Generalized Black-Scholes Formula was computed using the 12-point rule Composite Newton-Cotes Quadrature algorithms. Both algorithms yield the same European option price at two decimal places. The Black-Scholes (BS) model and GTS European options produce different pricing results. Compared to the option price under the GTS distribution, the BS model is underpriced for the near-the-money (NTM) and the in-the-money (ITM) options.  However, the BS model and GTS European options yield the same option price for the deep out-of-the-money (OTM) and the deep-in-the-money (ITM) options.\\

\vspace{6pt}
%%%%%%%%%%%%%%%%%%%%%%%%%%%%%%%%%%%%%%%%%%
 \textbf{Funding Declaration:} this research received no funding.
%%%%%%%%%%%%%%%%%%%%%%%%%%%%%%%%%%%%%%%%%%
%\cleardoublepage
%\pagestyle{plain}
%\onehalfspacing
%\newpage
%\addcontentsline{toc}{chapter}{References}
%\input{referenc}
%\bibliographystyle{IEEEtran}
%\bibliographystyle{unsrt}
%\bibliographystyle{plain}
%\nocite{*}
\setlength{\bibsep}{0pt plus 0.3ex}
\bibliography{nzokem_pricing}

\end{document}